%
%
\documentclass{article}
%
\usepackage[english]{babel}
\usepackage{tikz}        
\usepackage{xcolor}
\usepackage{amsmath,amsthm,amssymb,mathrsfs,pifont,cancel,paralist}%
\usepackage[utf8]{inputenc} 
\usepackage{empheq} 
\usepackage{nameref} 
\usepackage[notcite,notref,final]{showkeys} 
\usepackage{ulem} 
\usepackage{authblk}
\usepackage{natbib}

\newtheorem{prop}{Proposition} %

\usepackage[bookmarks,
colorlinks=true, linkcolor=black,
anchorcolor=black, citecolor=black, filecolor=black,
menucolor=black, pagecolor=black, urlcolor=black,
pdftitle={One dimensionale actin growth model}]{hyperref}

\newcommand{\brac}[1]{\left( {#1} \right)}

\newcommand{\red}[1]{\textcolor{red}{#1}}

\definecolor{forestgreen}{cmyk}{0.91,0,0.88,0.12}
\definecolor{brown}{cmyk}{0,0.81,1,0.60}
\definecolor{fuchsia}{cmyk}{0.47,0.91,0,0.08}

\newcommand{\Dert}[2]{\frac{d #1}{d #2}} 
\newcommand{\Derp}[2]{\frac{\partial #1}{\partial #2}} 
\newcommand{\iDerp}[2]{{\partial #1}/{\partial #2}} 

%
%
%
%
%
%
%
%
%
%
%
%
%
%
%
%
%
%
%
%

%
%
%
%
%
%
%
%
%
%
%
%
%
%
%
%
%
%
%
%
%
%
%
%
%
%
%
%
%
%
%
%
%
%
%
%
%
%
%
%
%
%
%
%
%
%
%
%
%
%
%
%
%
%
%
%
%
%
%
%
\newfont{\tenbfit}{cmmib10}%
\newfont{\svnbfit}{cmmib8}%
\newfont{\tenbfsl}{cmbxti10}

\newfont{\mmit} {cmmi10}%
%
%
%
%
%
%
%
%
%
%
%
%


\newcommand{\lj}{\mbox{$[\kern-0.1478125em[$}}
\newcommand{\rj}{\mbox{$]\kern-0.1478125em]$}}
\newcommand{\la}{\mbox{$\langle\kern-0.2325em\langle$}}
\newcommand{\ra}{\mbox{$\rangle\kern-0.2325em\rangle$}}
\newcommand{\Blj}{\mbox{$\Big[\kern-0.25em\Big[$}}
\newcommand{\Brj}{\mbox{$\Big]\kern-0.25em\Big]$}}

%


\begin{document}

\title{Treadmilling stability \\ of a one-dimensional actin growth model}
\date{July 10th 2019}
\author{Rohan Abeyaratne\thanks{rohan@mit.edu}}
\affil{Department of Mechanical Engineering, Massachusetts Institute of Technology, Cambridge, MA, USA}
\author{Eric Puntel\thanks{Corresponding Author: eric.puntel@uniud.it}}
\affil{Dipartimento Politecnico di Ingegneria e Architettura, Università di Udine, via del Cotonificio 114, Udine I-33100, Italy}
\author{Giuseppe Tomassetti\thanks{giuseppe.tomassetti@uniroma3.it}}
\affil{Dipartimento di Ingegneria, Università degli Studi Roma Tre, via Volterra 62, Roma I-00146, Italy}

\maketitle
\begin{abstract}
Actin growth is a fundamental biophysical process and it is, at the same time, a prototypical example of diffusion-mediated surface growth.
We formulate a coupled chemo-mechanical, one-dimensional growth model encompassing both material accretion and ablation. A solid bar composed of bound actin monomers is fixed at one end and connected to an elastic device at the other. This spring-like device could, for example, be the cantilever tip of an AFM.  The compressive force applied by the spring on the bar increases as the solid grows and affects the rate of growth. The mechanical behaviour of the bar, the diffusion of free actin monomers in a surrounding  solvent and the kinetic growth laws at the accreting/ablating ends are accounted for. The constitutive response of actin is  modeled by a convex but otherwise arbitrary  elastic strain energy density function. Treadmilling solutions, characterized by a constant length of the continuously evolving body, are investigated. Existence and stability results are condensed in the form of simple formulas and their physical implications are discussed.
\end{abstract}

\section{\label{sec:intro}Introduction}
It is well known that growth in living systems is not only promoted by biological and chemical signals but is also affected by mechanical stimuli \citep{Goriely:2017}.

Modelling growth, intended as variation of mass, poses a number of challenges in mechanics which are still being actively investigated. Among them is surface growth or accretion which, following the work of Skalak and others \citep{SkalakDasgupta1982,SkalakFarrow1997}, requires one to define and track in time an ever changing, usually stress-free, reference configuration, i.e. collection of material points. The phenomenon of accretion of a solid on its boundary, occurs in several contexts of physical, technological, and biological interest. One of the most common examples of surface growth is the solidification of water at the ice-water interface near the freezing temperature; other examples include technological processes such as chemical vapor deposition, 3D printing and layered building \citep{BacigalupoGambarotta:2012,ZurloTruskinovsky2017,ZurloTruskinovsky2018,TruskinovskyZurlo2019}; in biology, the growth of hard tissues like bones and teeth \citep{CiarlettaPreziosi2013,GanghofferGoda2018}. When surface growth occurs at an interior surface it generates stress, since each new layer of solid material that forms must push away the layers deposited previously.

A second delicate issue regards the prescription of a growth law. One may simply assume that as given. Conversely, growth speed could be obtained as a result of mechanical and biochemical local conditions. These in turn may be expressed by a suitable kinetic law once the thermodynamical force driving growth has been consistently defined \citep{AbeyaratneKnowles1990,TomassettiCohen:2016}.

Third, one may also describe the transport of the free particles that provide the material constituents for growth. In this way essential features of growth may emerge from the balance of coupled mechanical and biochemical responses.

In this work we formulate and analyze a one-dimensional model featuring the three aforementioned characteristics, albeit in a simplified manner. We consider an elastic bar fixed at one end and connected to an elastic device at the other. This spring-like elastic device could for example be the cantilever tip of an atomic force microscope as in the experiments described in \citep{ParekhChaudhuri2005,ChaudhuriParekh2007,BielingLi2016}.  The bar can grow by attaching or detaching its constituting particles (``monomers''), at either end. The diffusion of free particles in a surrounding or permeating solvent and the kinetic condition for growth are accounted for. The first objective of this study is to investigate a basic reference template of chemo-mechanical growth which allows one to discuss more easily modelling choices, notions and solutions.

The second motivation for this study is provided by a specific biological example, namely the growth of actin filaments. Actin in its polymerized network-forming state is an essential constituent of the cytoskeleton and is involved in cell contraction, division and motility. It is intensely studied in the bio-physical literature. See e.g.\ \cite{ProstJuelicher2015} for a review on the physics of active gels like actin, the Ph.D. thesis of \cite{Zimmermann2014} for a review of quantitative models of actin-based motility, and \cite{BindschadlerOsborn2004} and \cite{CardamoneLaio2011} for just two of the many examples of different biophysical and computational models of the properties of actin networks. Pertinent to this study, but not including mechanical aspects, is a one-dimensional mathematical model of actin polymerization kinetics by \cite{Edelstein-KeshetErmentrout2000}.

Of particular interest are the experimental studies described in \cite{ParekhChaudhuri2005,ChaudhuriParekh2007} and \cite{BielingLi2016} that involve an experimental setup similar to the one considered here, where an actin network is grown between the cantilever tip of an atomic force microscope (AFM) and a fixed surface below it thus realizing a bar-like structure fixed at one end and restrained by an elastic device at the other. Among other things, these experiments consistently suggest that the actin network adapts to higher values of applied compressive force by correspondingly increasing its density and stiffness. This is a feature that is currently not included in our model, but it constitutes a possible refinement for future work. 

Actin filaments exhibit a peculiar growth mode called treadmilling in which the length of the filament in physical space remains constant while accreting (i.e. attaching) actin monomers at one end and ablating (i.e. detaching) them at the other at equal rates \citep[see e.g.][]{Theriot2000}. This energy dissipating state is made possible by the energy provided by the hydrolysis of the ATP (adenosine triphosphate) bound to actin monomers into an ADP (adenosine diphosphate) molecule and a phosphate. Despite its peculiarity, treadmilling may also be seen as a specific instance of a more common biological paradigm by which systems, tissues or organisms continuously substitute their constituents or cells at specific rates even when their overall size is no longer changing.

The present work has two main results. First, under rather general assumptions on the behaviour of the material constituting the bar, we establish conditions for the existence of treadmilling states in terms of simple formulas.

Second, the stability of such solutions is discussed. Herein stability is not addressed energetically in classical structural mechanics terms (i.e. buckling) by considering perturbations perpendicular to the bar but rather dynamically, asking whether perturbations in the direction of the bar of the treadmilling state may cause the bar to abandon indefinitely its stationary length. For the treadmilling case in which ATP-actin accretes at the fixed end of the bar, it is possible to prove global stability under arbitrary initial conditions.


We expect that the present results will be a useful tool in the comparison and interpretation of other more complex models and experiments. For instance, our results on the global stability of a treadmilling solution may provide clarification or further evidence in support of the ``emergence of a universal growth path'' observed in a similar context by \cite{Abi-AklAbeyaratne2019}. 

The discussion on stability of the solution has also been motivated by experiments studying the growth and relative stability of an annulus of actin accreting on the surface of a spherical bead \citep{CameronFooter1999,NoireauxGolsteyn:2000,GuchtPaluch2005}. These experiments were in turn inspired by bacterium Listeria monocytogenes which exploits cytoplasmic actin to form a polymerized tail and move out of the cell membrane and spread \citep{ProstJoanny2008}. Existing numerical and modelling efforts on this subject can be found in \citep{JohnPeyla2008,BuylMikhailov2013}.

A model for a spherical annulus of actin growing on the surface of a sphere was formulated and analyzed by \cite{TomassettiCohen:2016}. Treadmilling solutions were studied there as well but not their stability. We plan to continue the above study and the present one by the analysis of the stability of the treadmilling solutions of that spherical annular system. Other extension of the present work include accounting for the aforementioned dependence of actin density and stiffness on externally applied stress and deriving analytical relationships between growth velocity and external force to be compared with experimental ones.

This paper is structured as follows. Section \ref{sec:model} describes the one-dimensional model including its mechanical, chemical and growth aspects. Section \ref{sec:const} provides the material constitutive description of the bar while the derivation of the driving force is given in Section \ref{sec:kinder}. The system is reduced to a differential algebraic equation in Section \ref{sec:DAE}, a form that is particularly suitable for the  subsequent discussion of the existence, uniqueness and stability of treadmilling solutions that is carried out in Section \ref{sec:TM}. The results are discussed in Section \ref{sec:disc} and concluding remarks are made in Section \ref{sec:end}. 

\section{\label{sec:model}One-dimensional model}

 \subsection{\label{ss:set} Problem setting}
  \begin{figure}[h]
  \begin{center}
  \includegraphics[width=0.75\linewidth]{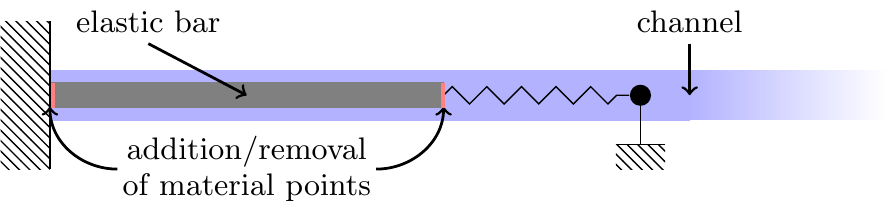}
  \end{center}
  \caption{\footnotesize An elastic bar clamped between a hard and a soft device, immersed in a semi-infinite channel.}
  \label{fig:setup}
  \normalsize
  \end{figure}
  We consider a one-dimensional body, represented by a bar in Figure \ref{fig:setup}, which grows and deforms in a one-dimensional physical space.

  The bar has a natural reference configuration that occupies the segment $(x_0(t),x_1(t))$ and whose generic point is denoted by $x$ . Here and in the following subscripts $0$ and $1$ refer to the left and right end sections of the bar respectively, both in the reference and in the current configurations.

  As represented in Figure \ref{fig:onedim}, the body is mapped into the physical one-dimensional space through the function $y(x,t)$ where it occupies the segment $(y_0(t),y_1(t))$. Here and in what follows the shorthand notation
  \begin{equation}\label{eq:fam}
     f_\alpha(t) = f(x_\alpha(t),t) \qquad \text{with } \alpha = 0,1 \;,
  \end{equation}
  denotes in general the value of a material quantity $f(x,t)$ at the end sections of the bar at time $t$. In particular $y_0(t)$ and $y_1(t)$ simply indicate the position of the end sections of the bar in the current configuration.


  %
  \begin{figure}[h]
  \begin{center}
  \includegraphics[width=0.75\linewidth]{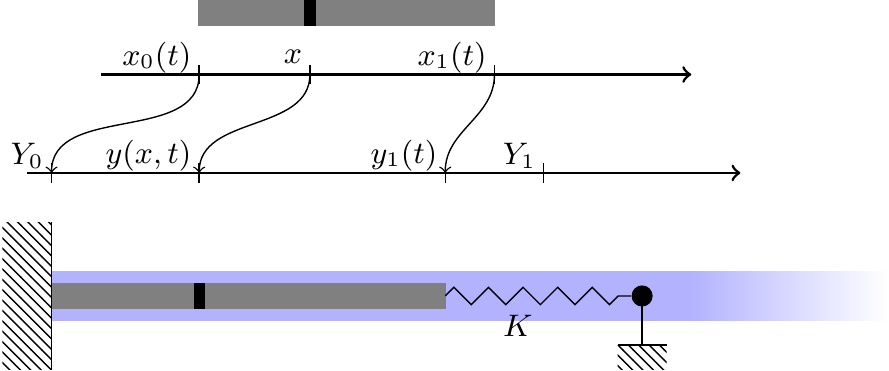}
  \end{center}
  \caption{\footnotesize  Reference (top) and current (bottom) configuration of the elastic bar.}
  \label{fig:onedim}
  \normalsize
  \end{figure}
  In regard to constraints, the terminal side $x_0$ of the bar is attached to the point $Y_0$ in the physical space, so that $y_0=y(x_0(t),t)=Y_0$. Likewise, the terminal side $x_1$ is attached to one end of a linear spring of stiffness $K$. The rest position of this end of the spring is $Y_1$, i.e. the spring is unstretched with zero force when this end is at $Y_1$. Its other end is fixed. It is worth noting that while the left-hand end of the bar is always located at $Y_0$ in physical space, the right-hand end is located at $Y_1$ only when the spring force vanishes.

  The bar is made of ``material units'', hereafter referred to as \emph{monomers}, which are in a bound, polymerized state. The same monomers in a free, unbound state are in solution in the solvent which fills a one dimensional infinite channel, depicted with a blue-shaded rectangle in Figure \ref{fig:setup}. Free monomers flow in the interval $(y_0(t),y_1(t))=(Y_0,y_1(t))$ according to Fick's law. We can think of them as either flowing only through the bar or flowing as well through the portion of the channel not occupied by the bar. They can freely cross the point $y_1$, where the body is in contact with a reservoir of monomers, but cannot flow past the left support $y_0 = Y_0$ which is assumed to be impermeable.
  The chemical potential $\mu_1$ of free monomers at $y_1(t)$ is held fixed and equal to $M_1$ , and there is an infinite supply of monomers at $y_1$.

  Finally, under suitable growth conditions to be later specified, free monomers may accrete, i.e. attach, at either end of the bar and conversely, bound monomers occupying the end positions $x_0$ and $x_1$ of the bar may ablate, i.e. detach, and return to their free state. When accretion or ablation occurs, the referential points $x_0$ and $x_1$, and hence the reference length of the bar, can change. Specifically, at the left-hand end $x_0$, accretion occurs when $\dot{x}_0 < 0$ while ablation occurs when $\dot{x}_0 >0$. Similarly accretion and ablation at the right-hand end $x_1$ correspond to $\dot{x}_1 >0$ and $\dot{x}_1 <0$ respectively.

  We now specify the equations governing the system just described.

  \subsection{\label{ss:mech}Mechanics}
  We require the deformation mapping $y(x,t): x \rightarrow y$ to be one to one by prescribing that the stretch $\lambda = y' = \iDerp{y}{x}$ be positive:
  \begin{equation*}
    \lambda(x,t) = y'(x,t) > 0 \;.
  \end{equation*}
  Here and in the following we use the prime to denote the derivative with respect to a variable other than time. That is $f' = \iDerp{f}{\bullet}$ with $f=f(\bullet)$ or $f=f(\bullet,t)$ with $t$ indicating time and $\bullet \neq t$. The dot is used, as customary, to indicate partial derivative with respect to time, i.e. $\dot{f} = \iDerp{f}{t}$.

  We assume the material to be hyperelastic and characterized by a convex strain energy density $W(\lambda)$ from which we can compute the axial force $\sigma$ in the bar as
  \begin{equation*}
    \sigma(x,t) = W'(\lambda(x,t)) \;.
  \end{equation*}
  A number of additional assumptions on the strain energy density are detailed in Section \ref{ss:W}. Given that $W$ does not depend explicitly on $x$, the material constituting the bar is taken to be homogeneous.

  The mechanical model for the bar is summarized in the following set of equations:

  \begin{subequations}\label{eq:mech}
    \begin{empheq}[left=\empheqlbrace]{align}
      &\Derp{\sigma}{x}=0& &\text{in }(x_0(t),x_1(t)), \label{eq:mech1}\\
      &\sigma= W'(\lambda), \quad \lambda = y'& &\text{in }(x_0(t),x_1(t)), \label{eq:mech2}\\
      &y_0(t) = y(x_0(t),t)=Y_0& &\text{in }x_0(t), \label{eq:mech3}\\
      &\sigma_1(t) +K(y_1(t)-Y_1) = 0& &\text{in }x_1(t)\;.  \label{eq:mech4}
    \end{empheq}
  \end{subequations}

  Equation \eqref{eq:mech1} represents local equilibrium in the reference configuration and implies that the axial force is constant in $x$. In eq.\ \eqref{eq:mech2} we state again the constitutive law and the definition of the stretch $\lambda$. The force $\sigma$ being constant in $x$, it follows that $\lambda$ is constant in $x$ and that $y$ is linear in $x$.

  The boundary condition prescribing that the leftmost section of the bar is fixed in $y=Y_0$ is expressed in eq.\ \eqref{eq:mech3}. The axial force $\sigma_1(t) = \sigma(x_1(t),t)$ in the bar at $x=x_1(t)$ is prescribed by the force in the spring of stiffness $K$ in eq. \eqref{eq:mech4}. Since $\sigma$ is independent of $x$, \eqref{eq:mech4} actually prescribes the value of the axial force in the whole bar.

  \subsection{\label{ss:diff}Diffusion}

  The following system
  \begin{subequations}\label{eq:diff}
    \begin{empheq}[left=\empheqlbrace]{align}
      &\Derp{h}{y}=0& &\text{in }(Y_0,y_1(t)), \label{eq:diff1}\\
      &h+m\mu'=0& &\text{in }(Y_0,y_1(t)), \label{eq:diff2}\\
      &h(y_0(t),t) = h(Y_0,t)=\varrho \dot x_0(t)& &\text{in }Y_0, \label{eq:diff3}\\
      &\mu(y_1(t),t)= M_1 & &\text{in }y_1(t)\;. \label{eq:diff4}
    \end{empheq}
  \end{subequations}
  %
  governs the flux of free monomers in the solvent. Here $h(y,t)$ is the monomer flux in the positive $y$ direction, $\mu(y,t)$ is the associated chemical potential and $m$ is the mobility.
  The first equation \eqref{eq:diff1} expresses the conservation of mass. In it we have omitted a term $\iDerp{h}{t}$ by assuming that diffusion is much faster than growth. Flux has the dimension of moles per unit time. The second equation \eqref{eq:diff2} represents Fick's law.
  The third equation \eqref{eq:diff3}, a boundary balance of mass, states that the flux of monomers at the impermeable wall is equal to the amount of monomers that detach from the left endpoint of the bar per unit time, which in turn is proportional to the ablation velocity $\dot x_0$ through a constant $\varrho$. We think of $\varrho$ as the number of moles of bound actin monomer per unit length in the reference configuration. The fourth equation  \eqref{eq:diff4} expresses the condition of chemical equilibrium at $y_1$ by equating $\mu_1 = \mu(y_1(t), t)$ to the chemical potential $M_1$ of the monomers in the semi-infinite monomer pool to the right of $y_1$. Note that $\mu_0 = \mu(y_0,t)$ is as yet unknown and to be determined.  

  \subsection{\label{ss:accr}Accretion}

  As anticipated in the \nameref{sec:intro}, a key ingredient of this model is the growth law governing the evolution of the referential configuration of the bar. We assume a simple, linear kinetic law of the form
  \begin{equation}\label{eq:kinshort}
    B_\alpha V_\alpha=F_\alpha \qquad \text{with } \alpha = 0,1
  \end{equation}
  In eq.\ \eqref{eq:kinshort}, $\alpha =0,1$ refer to the ends of the bar, $V_\alpha$ is the accretion velocity, $B_\alpha$ is a positive kinetic coefficient and $F_\alpha$ is the thermodynamical force driving accretion. Note that
  \begin{equation}\label{eq:Vxdot}
    V_0 = - \dot{x}_0 \quad \text{and} \quad V_1 = \dot{x}_1 \,.
  \end{equation}
  Realistically \eqref{eq:kinshort} is most suitable for small deviations from thermodynamic equilibrium. The expression for the driving force $F_\alpha$, derived afterwards in Section \ref{sec:kinder}, is
  \begin{equation}\label{eq:Fa}
    F_\alpha=\varrho(\mu_\alpha-M_{B,\alpha})+W^*(\sigma_\alpha),
  \end{equation}
  where $\sigma_\alpha$ and $\mu_\alpha$ are \emph{material} descriptions of the fields $\mu$, $\sigma$ evaluated at $x_\alpha$ following the notation introduced in \eqref{eq:fam}. Parameter $M_{B,\alpha}$ is a material constant interpreted as the chemical potential of \emph{bound} monomers at $x_\alpha$ and $W^*(\sigma)$ is the \emph{complementary} strain energy density whose definition and properties are given in Section \ref{ss:W*}. In particular we will see there that a tensile force $\sigma > 0$ corresponds to positive $W^*(\sigma)$ thus promoting growth according to \eqref{eq:kinshort}-\eqref{eq:Fa}. This is consistent with the layman's notion of stress induced growth popularized by images of abnormal growth of earlobes, necks and other body parts subject to sustained tension, especially observed in some indigenous tribes \citep[see e.g.\ ][Chapter 2.1]{Goriely:2017}.

  Motivated by the behaviour of actin filaments, see e.g.\ \cite{Theriot2000} or \citet[][panel 16-2]{AlbertsJohnson2015}, we admit two distinct values $M_{B,0}$ and $M_{B,1}$ for the chemical potential of monomers in the bound state. Actin monomers are bound to ATP (Adenosine TriPhosphate) when they first polymerize, i.e. accrete, but after some time a hydrolysis reaction ensues by which the ATP releases a phosphate group and the polymerized actin monomer is now tied to an ADP (Adenosine DiPhosphate) molecule. The hydrolysis reaction releases energy, part of which remains stored in the polymerized actin. Therefore ADP-actin is at a higher energy level, i.e. chemical potential, than ATP-actin. Due to differences in the properties of opposite ends of actin filaments, one end may be occupied by a lower-energy ATP-actin monomer and the other by a higher energy ADP bound actin monomer. Hence the distinction between values of the chemical potential of polymerized actin $M_{B,0}$ and $M_{B,1}$  at the two ends of the bar.
  %
  %

  As noted previously, a positive accretion velocity corresponds to a negative rate $\dot{x}_0$ and to a positive rate $\dot{x}_1$ whence $V_0 = - \dot{x}_0$ and $V_1 = \dot{x}_1$, and recalling that we are using the notation of equation \eqref{eq:fam}, the specialization of \eqref{eq:kinshort} and \eqref{eq:Fa} to the two ends of the bar can be written as
  \begin{subequations}\label{eq:accr}
    \begin{empheq}[left=\empheqlbrace]{align}
      -B_0\dot x_0(t)&=\varrho(\mu_0-M_{B,0})+W^*(\sigma_0(t))\;, \label{eq:accr1}\\
      B_1\dot x_1(t)&=\varrho(\mu_1-M_{B,1})+W^*(\sigma_1(t))\;. \label{eq:accr2}
    \end{empheq}
  \end{subequations}
  Despite the simplicity of the one-dimensional model, the above equations close the feedback loop between stress and growth. On the one hand, the presence of the spring in \eqref{eq:mech} allows growth to affect stress, while on the other hand, growth rates in \eqref{eq:accr} are influenced by stress.

  The evolution equations \eqref{eq:accr} also provide closure for the boundary-value problem \eqref{eq:mech} and \eqref{eq:diff}. In fact, the solution of \eqref{eq:mech}--\eqref{eq:diff} depends only on the instantaneous values of $x_0(t)$, $x_1(t)$ and of the rate $\dot{x}_0(t)$. This means, in particular, that the right-hand sides of the equations \eqref{eq:accr} ultimately depend only on $x_0(t)$, $x_1(t)$ and $\dot{x}_0(t)$. We therefore conclude that the combination of \eqref{eq:mech}, \eqref{eq:diff}, and \eqref{eq:accr} is equivalent to a first-order system in the unknowns $x_0(t)$ and $x_1(t)$. As such, this system must be complemented by suitable initial conditions

\section{\label{sec:const}Constitutive behaviour}
We assume the bar to be made of a homogeneous, hyperelastic material and we define its constitutive behaviour through the strain energy density function $W(\lambda)$. As seen in equation \eqref{eq:mech2}, the force $\sigma$ is given by $W'(\lambda)$ while $W''(\lambda)$ represents the tangent stiffness.

  \subsection{Strain energy density} \label{ss:W}

  A specific expression for $W(\lambda)$ is not prescribed. Instead, we merely assume that the strain energy density has the following characteristics:
  \begin{subequations}\label{eq:Wass}
    \begin{empheq}[left=\empheqlbrace]{align}
      &W(1) = 0 \label{eq:Wass1} \\
      &W'(1) = 0 \label{eq:Wass2} \\
      &W(\lambda) \to +\infty & & {\rm as} \ \lambda \to 0^+  \label{eq:Wass3}\\
      &W(\lambda) \to +\infty & &{\rm as} \ \lambda \to +\infty  \label{eq:Wass4}\\
      &W'(\lambda) \to +\infty & &{\rm as} \ \lambda \to +\infty  \label{eq:Wass5}\\
      &W''(\lambda) > 0  & &\forall \lambda > 0 \label{eq:Wass6}
    \end{empheq}
  \end{subequations}
  discussed in the following.

  The strain energy density is defined but for an arbitrary constant which is conveniently set in \eqref{eq:Wass1} by assigning zero energy to the undeformed state in which the stretch $\lambda$ is equal to $1$. In this latter case the force $\sigma$ is also zero according to eq.\ \eqref{eq:Wass2}. Equations \eqref{eq:Wass3} and \eqref{eq:Wass4} express the requirement that infinite strain energy is necessary to, respectively, infinitely compress and infinitely extend the material. Condition \eqref{eq:Wass6} on $W(\lambda)$ enforces convexity of the strain energy density which in turn implies that the stress is a monotonic function of the stretch, and the tangent stiffness is positive everywhere, i.e. there are no stress softening branches under increasing stretch. Note that $\sigma>0$ for $\lambda>1$ and $\sigma<0$ for $0 < \lambda<1$. Assuming sufficient regularity of $W(\lambda)$, equations \eqref{eq:Wass3} and \eqref{eq:Wass6} can be used to prove that, when the stretch tends to zero, the force tends to infinity, that is $\sigma \to -\infty$ when $\lambda \to 0^+$. It is easy to see that conditions \eqref{eq:Wass4} and \eqref{eq:Wass6} are not sufficient to obtain an analogous result for the case in which the stretch tends to infinity. The condition that $\sigma \to +\infty$ when $\lambda \to +\infty$ is therefore explicitly given in \eqref{eq:Wass5}. We note in passing that the set of assumptions \eqref{eq:Wass} is introduced in a constructive way and is not minimal since \eqref{eq:Wass4} follows from \eqref{eq:Wass5} and \eqref{eq:Wass6}.

  From the properties of $W(\lambda)$ ensue those of the force $\sigma$. Let
  \begin{equation}\label{eq:shat}
    \widehat{\sigma}(\lambda) := W'(\lambda) \quad \rm{,} \quad
    \widehat{\sigma}(\lambda) : \mathbb{R}^+ \longrightarrow \mathbb{R} \;.
  \end{equation}
  Then from \eqref{eq:Wass6} we know that $\widehat{\sigma}$ is monotonically increasing and, taking into account \eqref{eq:Wass3}-\eqref{eq:Wass5} as well, that it spans the entire real line.

  The force $\widehat{\sigma}$ as a function of the stretch is therefore invertible and the function
  \begin{equation}\label{eq:lhat}
    \widehat{\lambda}(\sigma) : \mathbb{R} \longrightarrow \mathbb{R}^+ \;, \quad \widehat{\lambda}(\sigma) \text{ such that }  W'(\widehat{\lambda}) = \sigma
  \end{equation}
  is uniquely defined. It can be easily seen that $\widehat{\lambda}(\sigma)$ is monotonically increasing
  \begin{equation} \label{eq:lhat'}
    \widehat{\lambda}'(\sigma) = \frac{1}{W''(\lambda)} > 0
  \end{equation}
  and that it possesses the following properties:
  \begin{equation} \label{eq:lhatprop}
    \widehat{\lambda}(\sigma) \to 0^+ \ {\rm as} \ \sigma \to -\infty, \qquad \widehat{\lambda}(0) = 1, \qquad
    \widehat{\lambda}(\sigma) \to  +\infty \ {\rm as} \ \sigma \to +\infty.
  \end{equation}

  \subsection{Complementary strain energy density} \label{ss:W*}

  The complementary strain energy density $W^*(\sigma)$ is defined as
  \begin{equation}\label{eq:W*def}
    W^*(\sigma) = \sigma \widehat{\lambda}(\sigma) - W(\widehat{\lambda}(\sigma)) \;\red{.}
  \end{equation}
  $W^*(\sigma)$ is the Legendre transform of $W(\lambda)$. It has the properties
  \begin{subequations}\label{eq:W*}
    \begin{empheq}[left=\empheqlbrace]{align}
      &W^*\, '(\sigma) = \lambda > 0 & & {\rm with} \ \lambda = \widehat{\lambda}(\sigma)\label{eq:W*1} \\
      &W^*\, ''(\sigma) = \widehat{\lambda}'(\sigma) > 0 & &\forall \sigma \in \mathbb{R} \label{eq:W*2} \\
      &W^*\, '(\sigma) \to 0^+ & & {\rm as} \ \sigma \to -\infty \label{eq:W*3}\\
      &W^*\, '(\sigma) \to +\infty & & {\rm as} \ \sigma \to +\infty \label{eq:W*4}\\
      &W^*(0) = 0 \label{eq:W*5} \\
      &W^*(\sigma) \to \pm \infty & & {\rm as} \ \sigma \to \pm\infty \label{eq:W*6}
    \end{empheq}
  \end{subequations}
  all of which follow from assumptions and definitions previously made in equations \eqref{eq:Wass}-\eqref{eq:W*def}. The key property of $W^*(\sigma)$ is \eqref{eq:W*1}. It ensues from the definition \eqref{eq:W*def} since
  \begin{equation*}
    W^*\, '(\sigma) =  \widehat{\lambda}(\sigma) + \sigma \widehat{\lambda}'(\sigma) - W'(\lambda)\widehat{\lambda}'(\sigma) = \widehat{\lambda}(\sigma) \;.
  \end{equation*}
  Given that $\widehat{\lambda}(\sigma)$ is positive, we have that $W^*(\sigma)$ is a monotonically increasing function. Moreover, \eqref{eq:W*2} follows from \eqref{eq:lhat'} whence $W^*(\sigma)$ is also strictly convex. Properties \eqref{eq:W*3} and \eqref{eq:W*4} are simply restatements of \eqref{eq:lhatprop}. Property \eqref{eq:W*5} follows from the definition of $W^*(\sigma)$ and, together with monotonicity, implies that $W^*(\sigma)>0$ when $\sigma>0$ and $W^*(\sigma)<0$ when $\sigma<0$. The last property \eqref{eq:W*6} follows as well from the definition and the preceding properties. It is important because it implies that $W^*(\sigma): \mathbb{R} \longrightarrow \mathbb{R}$ is surjective, and given the injectivity implied by \eqref{eq:W*1}, also invertible. We will use this result in what follows, so it is worth noticing that it is a consequence of the assumptions \eqref{eq:Wass3} and \eqref{eq:Wass5}.


\section{\label{sec:kinder}Derivation of the driving force}
Accretion is a non-equilibrium process involving dissipation. The latter can be computed as the product of a flux, accretion rates in our case, and of a conjugate driving force which quantifies the departure from thermodynamic equilibrium.

In this section we provide the derivation of the expression of the driving force in eq.\ \eqref{eq:Fa}. We follow \cite{TomassettiCohen:2016} and \cite{AbeyaratneKnowles1990,AbeyaratneKnowles1997}.

We start from the expression of the dissipation rate associated with the bar,
\begin{equation}\label{eq:dissdef}
  \text{dissipation rate} \coloneqq \sigma \Dert{y}{t}\Big|^{x_1}_{x_0} +
                            \varrho(\mu-M_B) \dot{x}\Big|^{x_1}_{x_0} -
                            \Dert{ }{t} \int_{x_0}^{x_1} W(\lambda) d x  \;,
\end{equation}
which involves the sum of three terms. The first term represents the mechanical power of external loads, the second the inflow of chemical energy per unit time and the third the energy flow per unit time elastically stored in the material and therefore not dissipated.

It is crucial to observe that the velocity of the boundary
\begin{equation}\label{eq:vel}
\dot{y}_\alpha(t) =     \Dert{ }{t}y(x_\alpha(t),t) = v_\alpha + y' \dot{x}_\alpha = v_\alpha + \lambda_\alpha \dot{x}_\alpha,
\end{equation}
differs from the velocity \emph{at the boundary}:
\begin{equation}\label{eq:vel2}
  v_\alpha = \Derp{y_\alpha}{t}(x_\alpha(t),t) \,,
\end{equation}
the two quantities being equal only when the growth velocities $\dot{x}_\alpha$ vanish.


We rewrite the third term in \eqref{eq:dissdef} using  Leibnitz's rule (the divergence theorem in one-dimension), transport theorems and equations \eqref{eq:mech1} and \eqref{eq:mech2},
\begin{equation}\label{eq:Wdiv}
  \begin{split}
\Dert{ }{t} \int_{x_0}^{x_1} W(\lambda) d x ={}& \int_{x_0}^{x_1} W'(\lambda) (\dot{y})' d x + W(\lambda) \dot{x}\Big|^{x_1}_{x_0} \\
       ={}& (\sigma \dot{y} + W(\lambda)) \dot{x} \Big|^{x_1}_{x_0} \;.
  \end{split}
\end{equation}

On substituting equations \eqref{eq:vel} and \eqref{eq:Wdiv} into the expression \eqref{eq:dissdef} of the dissipation rate we obtain
\begin{equation}\label{eq:diss2}
\begin{split}
   \text{dissipation rate} ={}&  (\sigma \dot{y} + \sigma \lambda \dot{x} +
                            \varrho(\mu-M_B)\dot{x}
                            - (\sigma \dot{y} + W(\lambda) ) \dot{x} )\Big|^{x_1}_{x_0} \\
     ={}& (\varrho(\mu-M_B) + (\sigma \lambda - W(\lambda))) \; \dot{x} \Big|^{x_1}_{x_0}\\
     ={}& \Big(\varrho(\mu-M_B) + W^*(\sigma)\Big) \dot{x} \Big|^{x_1}_{x_0} \;,
\end{split}
\end{equation}
in which the multiplier of the accretive flux $\dot{x}$ is precisely the driving force of growth introduced in equation \eqref{eq:Fa}.


\section{\label{sec:DAE} Reduction to a differential algebraic system}

Here the system of equations presented in Section \ref{sec:model} is reduced to a differential algebraic system and new notation is introduced, suitable for the ensuing discussion on the existence and stability of treadmilling solutions.

 \subsection{\label{ss:DAEmech} Mechanics}

 Let
 \begin{equation}
  \label{eq:ell}
  \ell(t) = x_1(t) - x_0(t) > 0 \;,
 \end{equation}
 be the length of the bar in the reference configuration.
 The integration of the mechanical system of equations \eqref{eq:mech} yields
 \begin{subequations}\label{eq:msol}
  \begin{empheq}[left=\empheqlbrace]{align}
    &y(x,t) = \lambda(t) (x-x_0(t)) +Y_0, & & \forall \;\; x_0(t) \leq x \leq x_1(t) \label{eq:msol1}\\
    &\sigma(t) = W'(\lambda(t)) \label{eq:msol2}\\
    &K \lambda(t) \ell(t)   = \sigma_{\rm max} - {\sigma(t)} \label{eq:msol3}
  \end{empheq}
 \end{subequations}
 where we have termed
 \begin{equation} \label{eq:smaxdef}
  \sigma_{\rm max} = K(Y_1 - Y_0)
 \end{equation}
 the maximum force attainable in the bar and in the spring. Since both $\lambda >0$ and $\ell >0$, it follows from \eqref{eq:msol3} that
 \begin{equation}\label{eq:s<smax}
  \sigma <  \sigma_{\rm max}.
 \end{equation}
 We consider $\sigma_{\rm max}$ to be an arbitrarily tunable parameter since we can imagine being able to vary the rest position $Y_1$ of the spring, to the right or to the left of $Y_0$, to attain any desired value of $\sigma_{\rm max}$.

 From \eqref{eq:msol1} we have
 \begin{equation}\label{eq:lamell}
   \lambda = (y_1 - y_0)/(x_1 - x_0) = (y_1 - y_0)/\ell \qquad \Rightarrow \qquad (y_1 - y_0) = \lambda \ell
 \end{equation}
 and so, as expected, $\lambda \ell$ denotes the length of the body in physical space.

 Equations \eqref{eq:msol} describe a unique motion $y(x,t)$ and force $\sigma(t)$ in terms of  $x_0$, $x_1$. To see it, combine \eqref{eq:msol2} and \eqref{eq:msol3} to give
 \begin{equation}\label{eq:lamsol}
   W'(\lambda)  = \sigma_{\rm max} - K \ell \lambda \;.
 \end{equation}
 In light of the assumed properties \eqref{eq:Wass} of $W(\lambda)$, it is readily shown that there exists a unique root $\lambda >0$ of this equation corresponding to any given $\ell >0$, $K>0$ and $\sigma_{\rm max}$. Moreover in view of \eqref{eq:Wass6}, the root $\lambda$ decreases monotonically with increasing $\ell$. The corresponding force is then given by \eqref{eq:msol2}. These representations will of course involve given values of $K$, $Y_0$, $Y_1$  and the yet to be found values $x_0$, $x_1$.

 The length $\ell$ of the body in reference space given through \eqref{eq:msol3} can be expressed in terms of force $\sigma$ as
 \begin{equation}\label{eq:ls}
  \ell = \bar{\ell}(\sigma) \coloneqq \frac{\sigma_{\rm max} - \sigma}{K \widehat{\lambda}(\sigma)},
 \end{equation}
  for all $\sigma < \sigma_{\rm max}$, where the function $\widehat{\lambda}(\sigma)$ is the inverse of the force-stretch relation $\sigma = W'(\lambda)$ introduced in eq.\ \eqref{eq:lhat}.
  In view of  \eqref{eq:lhat'}-\eqref{eq:lhatprop}, this shows that
  \begin{equation}\label{eq:lsprop}
    \bar{\ell}{\,}'(\sigma) < 0, \quad
    \bar{\ell}(\sigma) \to +\infty \ {\rm as} \ \sigma \to - \infty, \quad
    \bar{\ell}(\sigma) \to 0^+ \ {\rm as} \ \sigma \to \sigma_{\rm max}^-.
  \end{equation}
  The function $\bar{\sigma}(\ell)$ that is inverse to $\bar{\ell}(\sigma)$ obeys
  \begin{align}\label{eq:sl}
    \bar{\ell}(\bar{\sigma}(\ell)) &= \ell &
    \bar{\sigma}(\ell) &\to \sigma_{\rm max}^-  \quad\text{ as }\; \ell \to 0^+, \notag\\
    \bar{\sigma}'(\ell) &<0 &
    \bar{\sigma}(\ell) &\to - \infty \quad\text{ as }\; \ell \to +\infty \;.
  \end{align}
  %
  From equations \eqref{eq:ls}-\eqref{eq:sl} we can appreciate how growth, i.e. a change of the length of the bar $\ell$ in the reference configuration, affects force and stretch at equilibrium. A decrease in length $\ell$ in \eqref{eq:ls} produces an increase in stretch $\lambda$ and in force $\sigma$ until, eventually, $\ell$ goes to zero, stretch $\lambda$ to infinity and the force to its maximum value $\sigma_{\rm max}$. Conversely, an increase in material length $\ell$ decreases both stretch and force. An indefinite increase of $\ell$ leads the force towards infinite compressive values and stretch towards zero.


  \subsection{\label{ss:DAEdiff} Diffusion}

  The solution of the system of equations \eqref{eq:diff} yields
  \begin{subequations}\label{eq:dsol}
   \begin{empheq}[left=\empheqlbrace]{align}
     &\mu(y,t) = M_1\frac{y-y_0}{y_1 - y_0} + \mu_0 \frac{y_1 - y}{y_1 - y_0}, & & \forall \;\; y_0(t) \leq y \leq y_1(t) \label{eq:dsol1}\\
     &h(y,t) = - m \frac{M_1 - \mu_0}{y_1 - y_0} \label{eq:dsol2}\\
     &\mu_0 = M_1 + \frac{\varrho}{m}\,(y_1 - y_0) \,\dot{x}_0 \label{eq:dsol3}
   \end{empheq}
  \end{subequations}
  and we recall that $y_1 - y_0 = \lambda \ell$. Using \eqref{eq:msol3} and \eqref{eq:lamell} one can express $\mu_0$ in terms of the force $\sigma$,
  \begin{equation}\label{eq:mu0s}
    \mu_0 = M_1 + \frac{\varrho}{K m} (\sigma_{\rm max}-\sigma) \,\dot{x}_0 \;.
  \end{equation}
  Observe that \eqref{eq:mu0s} can be used to eliminate the unknown chemical potential $\mu_0$ from the other equations where its appears, namely \eqref{eq:accr1}, \eqref{eq:dsol1} and \eqref{eq:dsol2}.

  Finally we note that if $x_0$ and $x_1$ are known, then as noted previously $y_1$ can be determined from \eqref{eq:ell}, \eqref{eq:lamell} and \eqref{eq:lamsol}, $y_0=Y_0$ being of course known. If in addition $\dot{x}_0$ is known then the chemical potential and flux fields are fully determined through \eqref{eq:dsol}.

  \subsection{\label{ss:DAEaccr} Accretion}

  Using \eqref{eq:mu0s}, setting $\mu_1=M_1$ and noting that $\sigma_0(t) = \sigma_1(t) = \sigma(t)$, we rewrite the pair of kinetic equations \eqref{eq:accr} as
  \begin{subequations}\label{eq:asub}
    \begin{empheq}[left=\empheqlbrace]{align}
      \dot x_0(t)&=-\frac{1}{B_0}\frac{\varrho(M_1-M_{B,0})+W^*(\sigma(t))}
      {1+\frac{\varrho^2}{m B_0 K} (\sigma_{\rm max}-\sigma)} , \label{eq:asub1}\\
      \dot x_1(t)&=\frac{1}{B_1}\brac{\varrho(M_1-M_{B,1})+W^*(\sigma(t))} \,.\label{eq:asub2}
    \end{empheq}
  \end{subequations}
  We now introduce forces $\sigma_{\alpha 0}$, $\sigma_{\alpha 1}$ exploiting the bijectivity of $W^*(\sigma)$ in $\mathbb{R}$ at which the accretion rates $\dot x_0(t)$, $\dot x_1(t)$ vanish:
  \begin{equation}\label{eq:salpha}
    \sigma_{\alpha 0} \, : \; -W^*(\sigma_{\alpha 0}) = \varrho(M_1-M_{B,0}) \;, \quad
    \sigma_{\alpha 1} \, : \; -W^*(\sigma_{\alpha 1}) = \varrho(M_1-M_{B,1}) \,.
  \end{equation}
Because of the monotonicity of $W^*$ we see that $\sigma_{\alpha 0} < \sigma_{\alpha 1}$ if $M_{B,0} < M_{B,1}$ and vice versa. As to which of these holds will play a central role in Section \ref{ss:TMexist} when we look at the existence of treadmilling states.  Finally let the forces $\Delta\sigma$, $\sigma_{\rm asymp}$ be defined by
  \begin{equation}\label{eq:sasym}
    \Delta\sigma \coloneqq \sigma_{\rm asymp}-\sigma_{\rm max} \coloneqq \frac{m B_0 K}{\varrho^2} > 0 \;.
  \end{equation}
  This allows us to write
  \begin{subequations}\label{eq:afin}
    \begin{empheq}[left=\empheqlbrace]{align}
      \dot x_0(t)&= R_0(\sigma)  \coloneqq -\frac{\Delta\sigma}{B_0}\;\frac{W^*(\sigma(t))-W^*(\sigma_{\alpha 0})}
      {\sigma_{\rm asymp}-\sigma} , \label{eq:afin1}\\
      \dot x_1(t)&= R_1(\sigma) \coloneqq \frac{1}{B_1}\,\brac{W^*(\sigma(t))-W^*(\sigma_{\alpha 1})} \,, \label{eq:afin2}
    \end{empheq}
  \end{subequations}
  for the accretion rates $\dot x_0(t)$, $\dot x_1(t)$ as functions $R_0(\sigma)$ and $R_1(\sigma)$ of the force, respectively.

Since the chemical potential of the solvent bath $M_1$ can be varied, according to their definitions \eqref{eq:salpha}, the values of $\sigma_{\alpha 0}$ and $\sigma_{\alpha 1}$ may also be varied, but not independently. In addition, for the admissible values \eqref{eq:s<smax} of the force $\sigma$ smaller than $\sigma_{\rm max}$, relation \eqref{eq:sasym} and the monotonicity \eqref{eq:W*1} of $W^*(\sigma)$ tell us that
  \begin{equation}\label{eq:R>0}
    R_0(\sigma) \lesseqqgtr 0 \;\text{ for } \sigma \gtreqqless \sigma_{\alpha 0} \text{ and } \sigma < \sigma_{\rm max}\,, \quad
    R_1(\sigma) \gtreqqless 0 \;\text{ for } \sigma \gtreqqless \sigma_{\alpha 1} \,,
  \end{equation}
  underscoring in particular that $\sigma_{\alpha 0}$ and $\sigma_{\alpha 1}$ are the unique zeros of $R_0(\sigma)$ and $R_1(\sigma)$ respectively.

  For $R_1(\sigma)$ we can easily infer its properties from those of $W^*$: it is a convex, monotonically increasing function whose image is all $\mathbb{R}$ and whose derivative tends to $0^+$ for $\sigma \to -\infty$ and to $+\infty$ for $\sigma \to +\infty$.

  Of $R_0(\sigma)$ we know that it tends to $-\infty$ as $\sigma$ approaches $\sigma_{\rm asymp}$ from below in the case where $\sigma_{\rm asymp} > \sigma_{\alpha 0}$.  Using l'Hopital's rule we also deduce that $R_0(\sigma)$ tends to zero as $\sigma \to -\infty$.
  Looking at the first derivative of $R_0(\sigma)$,
  \begin{equation}\label{eq:R0'}
    R_0'(\sigma)  = -\frac{\Delta\sigma}{B_0}\;\frac{W^*(\sigma)-W^*(\sigma_{\alpha 0})
    +W^*\,'(\sigma) (\sigma_{\rm asymp}-\sigma)}
      {(\sigma_{\rm asymp}-\sigma)^2} \;,
  \end{equation}
  we observe that it is strictly negative for
  $\sigma_{\alpha 0} < \sigma < \sigma_{\rm asymp}$. Instead, $R_0$ is not monotonic for $\sigma < \sigma_{\alpha 0} < \sigma_{\rm asymp}$, since its derivative has opposite signs at the two ends $\sigma_{\alpha 0}$, $-\infty$ of that interval.


  \subsection{\label{ss:DAEeq} Differential algebraic equation}

 By \eqref{eq:W*1}, \eqref{eq:ls} and  \eqref{eq:afin} the model under consideration governing the evolution of the length $\ell$ of the bar in the reference configuration reduces to the following differential algebraic equation
  \begin{subequations}\label{eq:lsdae}
    \begin{empheq}[left=\empheqlbrace]{align}
      \dot \ell &= R_1(\sigma) - R_0(\sigma) \notag \\
      &=
      \frac{1}{B_1}\,\brac{W^*(\sigma)-W^*(\sigma_{\alpha 1})}
      +\frac{\Delta\sigma}{B_0}\;\frac{W^*(\sigma)-W^*(\sigma_{\alpha 0})}
      {\sigma_{\rm asymp}-\sigma} ,\label{eq:lsdae1}\\
      \ell &= \bar{\ell}(\sigma) = \frac{\sigma_{\rm max} - \sigma}{K \,\, W^* \,'(\sigma)}  \,, \label{eq:lsdae2}
    \end{empheq}
  \end{subequations}
  in which $\ell(t)$ and $\sigma(t)$ are sought under initial conditions, see \eqref{eq:xyinit} and \eqref{eq:lamell},
  \begin{equation}\label{eq:lsinit}
    \ell(0) = x_{10}-x_{00}, \quad \sigma(0)= W'\brac{\frac{Y_{10}-Y_0}{x_{10}-x_{00}}} \;,
  \end{equation}
  and under the constraint $\ell > 0$ which is equivalent to $\sigma < \sigma_{\rm max}$.


\section{\label{sec:TM} Existence and stability of treadmilling solutions}

In a so-called treadmilling solution the length $\ell = \ell_{\rm TM}$ of the bar in the reference configuration does not vary with time: $\dot{\ell}=0$, and this corresponds to values $\sigma_{\rm TM}$ of the force for which $R_0(\sigma_{\rm TM}) = R_1(\sigma_{\rm TM})$.

In the rest of this section we investigate the existence and stability of treadmilling solutions.


\subsection{\label{ss:TMstab} Stability of treadmilling states}


We start by discussing the stability of a treadmilling solution, assuming one to exist, by perturbing a treadmilling state characterized by force $\sigma_{\rm TM}$ and length $\ell_{\rm TM} = \bar{\ell}(\sigma_{\rm TM})$ according to \eqref{eq:lsdae2}.

The perturbation of equation \eqref{eq:lhat}, $\lambda = \widehat{\lambda}(\sigma)$, yields
\begin{equation*}
  \delta \lambda = \widehat{\lambda}'(\sigma_{\rm TM}) \delta \sigma .
\end{equation*}
Operating analogously on equation \eqref{eq:msol3}, $K \lambda \ell = \sigma_{\rm max} - \sigma$, gives
\begin{equation*}
  K \widehat{\lambda}(\sigma_{\rm TM}) \delta \ell + K \ell_{\rm TM} \delta \lambda = - \delta \sigma .
\end{equation*}
Combining the two preceding equations provides a relation between the perturbations $\delta \ell$ and $\delta \sigma$,
\begin{equation}\label{eq:psell}
K \widehat{\lambda}(\sigma_{\rm TM}) \, \delta \ell + \brac{K \ell_{\rm TM} \widehat{\lambda}'(\sigma_{\rm TM}) + 1} \, \delta \sigma = 0.
\end{equation}
From the expression \eqref{eq:lsdae1} of $\dot{\ell}(\sigma)$ we obtain
\begin{equation*}
  \delta \dot{\ell} = \brac{R_1'(\sigma_{\rm TM}) - R_0'(\sigma_{\rm TM})} \, \delta \sigma.
\end{equation*}
Combining this with \eqref{eq:psell} yields
\begin{equation*}
  \delta \dot{\ell} = - F(\sigma_{\rm TM}) \delta \ell \qquad {\rm where} \quad F(\sigma_{\rm TM}) =
  \frac{ R_1'(\sigma_{\rm TM}) - R_0'(\sigma_{\rm TM}) }{ K \ell_{\rm TM} \widehat{\lambda}'(\sigma_{\rm TM}) + 1 } \, K \widehat{\lambda}(\sigma_{\rm TM}).
\end{equation*}
The treadmilling solution is stable if the linear ordinary differential equation $\delta \dot{\ell}(t) = - F(\sigma_{\rm TM}) \delta \ell(t)$ has exponentially decaying solutions\footnote{If $\delta \ell(t)$ vanishes exponentially then so do the perturbations of the various other quantities.}, and this  occurs if and only if $F(\sigma_{\rm TM}) >0$. Since $\widehat{\lambda}(\sigma_{\rm TM}) >0$, $\ell_{\rm TM} >0$ and $\widehat{\lambda}' (\sigma_{\rm TM}) >0$ it follows that the
treadmilling solution is stable if and only if
\begin{equation} \label{eq:TMstable}
R_1'(\sigma_{\rm TM}) > R_0'(\sigma_{\rm TM}) .
\end{equation}
It is interesting to use the monotonic relation between the force $\sigma$ and the referential length $\ell$ to plot the evolution of the system in the neighborhood of a stable treadmilling solution on the $\ell, \dot{\ell}$-plane. The system \eqref{eq:lsdae} can be written using
\eqref{eq:sl} as $\dot{\ell} = R_1(\bar\sigma(\ell)) -  R_0(\bar\sigma(\ell))$.  Given that $\bar{\sigma}(\ell)$ is a monotonically decreasing function,  the slope of $\dot{\ell}(\ell)$ is thus seen to be negative close to a stable treadmilling in view of \eqref{eq:TMstable}.

This is shown schematically in Figure \ref{fig:ell}. Observe how, if $\ell > \ell_{\rm TM}$ at some time then $\dot{\ell} <0$ and so $\ell(t)$ will decrease until it reaches the treadmilling value  $\ell_{\rm TM}$. Likewise if $\ell < \ell_{\rm TM}$, $\ell(t)$ will increase to $\ell_{\rm TM}$.
\begin{figure}[h]
\begin{center}
\includegraphics[width=0.5\linewidth]{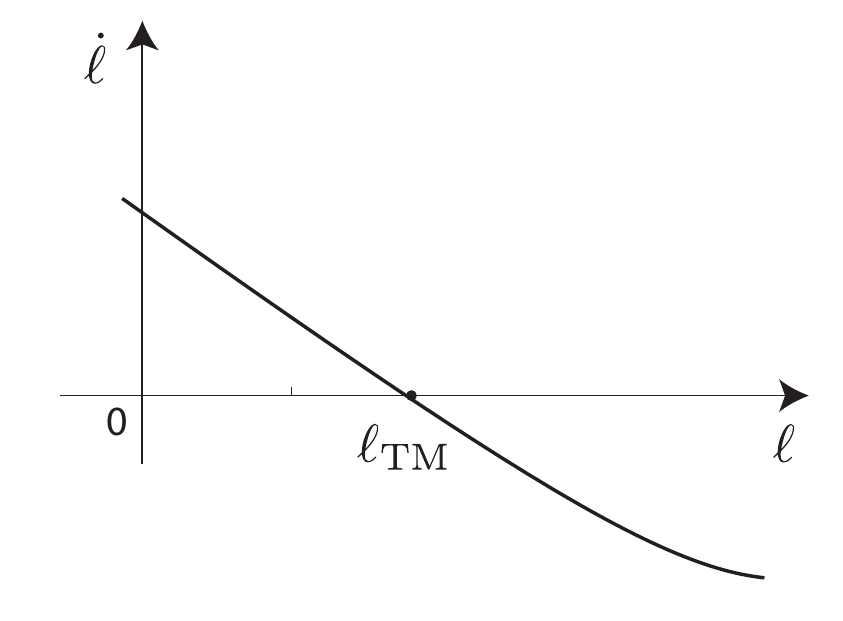}
\end{center}
\caption{\footnotesize Graph of  $\dot{\ell} = \dot{x}_1 - \dot{x}_0 = R_1(\bar{\sigma}(\ell)) - R_0(\bar{\sigma}(\ell))$ versus $\ell$ where $\bar{\sigma}(\ell)$ is given through \eqref{eq:sl}.}
\label{fig:ell}
\normalsize
\end{figure}


\subsection{\label{ss:TMexist} Existence of treadmilling solutions}

We are going to show that a relevant condition for the existence of a treadmilling solution is that accretion prevails over ablation in the limit $\ell \to 0$. In view of \eqref{eq:ls} this is equivalent to $\sigma \to \sigma_{\rm max}$ and hence, by \eqref{eq:lsdae1} to
\begin{equation}\label{eq:R1>R0}
  R_1(\sigma_{\rm max}) > R_0(\sigma_{\rm max}) \, .
\end{equation}
%
It is possible to formulate \eqref{eq:R1>R0} in a more convenient form by introducing the force $\sigma_\beta$ according to
\begin{equation} \label{eq:sb}
   W^*(\sigma_\beta) =
    \frac{1}{1+\beta} W^*(\sigma_{\alpha 0}) + \frac{\beta}{1+\beta} W^*(\sigma_{\alpha 1}) \,, \quad \text{with} \quad \beta = \frac{B_0}{B_1} > 0 \; .
\end{equation}
From the monotonicity of $W^*$, it is clear that the value of $\sigma_\beta$ always lies between $\sigma_{\alpha 0}$ and $\sigma_{\alpha 1}$ defined in \eqref{eq:salpha}.
 It is readily seen that the inequality \eqref{eq:R1>R0} is equivalent to
\begin{equation}\label{eq:sb<smax}
  \sigma_\beta < \sigma_{\rm max} \; .
\end{equation}
The inequality \eqref{eq:sb<smax} will play a central role in the results to follow.

\subsubsection{\label{sss:TM0<1} Case $M_{B,0}$ smaller than $M_{B,1}$:}

We discuss first the case that is more relevant with regard to applications \citep{TomassettiCohen:2016,Abi-AklAbeyaratne2019} and experiments \citep{CameronFooter1999,NoireauxGolsteyn:2000,GuchtPaluch2005,BielingLi2016} where the treadmilling state involves accretion at the fixed end $Y_0$ and ablation at the free end $y_1$ (rather than the converse).

Suppose that
\begin{equation} \label{eq:MB0<MB1}
  M_{B,0} < M_{B,1} \,,
\end{equation}
corresponding to ATP-actin being at $Y_0$ and the hydrolyzed ADP-actin at $y_1$. From the definition \eqref{eq:salpha} and the monotonicity \eqref{eq:W*1} of $W^*$,  it follows that  \eqref{eq:MB0<MB1} holds if and only if
\begin{equation}\label{eq:sa0<sa1}
  \sigma_{\alpha 0} < \sigma_{\alpha 1} \,.
\end{equation}

Under the above provision, it is possible to prove that,
\begin{prop}[Existence and uniqueness] \label{prop-1} Assume that \eqref{eq:MB0<MB1} (equivalently \eqref{eq:sa0<sa1}) holds. Then
  $\sigma_\beta < \sigma_{\rm max}$ is a necessary and sufficient condition for the existence and uniqueness of a treadmilling solution. Such a solution is globally stable.
\end{prop}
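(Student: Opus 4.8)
The plan is to reduce the search for a treadmilling state to the problem of locating, in the admissible interval $\sigma<\sigma_{\rm max}$ (equivalently $\ell>0$), the zeros of the scalar function
\[
g(\sigma) := R_1(\sigma)-R_0(\sigma)
\]
built from the accretion rates \eqref{eq:afin}; recall from \eqref{eq:lsdae} that $\dot{\ell}=g(\sigma)$, so a treadmilling force is precisely such a zero. Local stability will then be upgraded to global stability using the reduction, already recorded in Section~\ref{ss:TMstab}, of the differential--algebraic system \eqref{eq:lsdae} to the scalar autonomous equation $\dot{\ell}=g(\bar{\sigma}(\ell))$ via the monotone change of variable $\sigma=\bar{\sigma}(\ell)$ of \eqref{eq:sl}. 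A remark worth making at the outset is that admissibility forces $\sigma<\sigma_{\rm max}<\sigma_{\rm asymp}$, so that $\sigma_{\rm asymp}-\sigma>\Delta\sigma>0$ on the whole range of interest; hence $R_0$ and $R_0'$ are finite and continuous there, in particular up to $\sigma_{\rm max}$, which sidesteps any concern about the position of $\sigma_{\rm asymp}$ relative to $\sigma_{\alpha 0}$.

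First I would rule out zeros of $g$ in the set $\sigma\le\sigma_{\alpha 0}$. This is the one place where the hypothesis $\sigma_{\alpha 0}<\sigma_{\alpha 1}$ (equivalently \eqref{eq:MB0<MB1}) enters in an essential way: since $R_1$ is strictly increasing with unique zero at $\sigma_{\alpha 1}>\sigma_{\alpha 0}$, we have $R_1(\sigma)<0$ for every $\sigma\le\sigma_{\alpha 0}$, whereas \eqref{eq:R>0} gives $R_0(\sigma)\ge 0$ on the same set, vanishing only at $\sigma_{\alpha 0}$; therefore $g(\sigma)<0$ there and no treadmilling force can lie in $(-\infty,\sigma_{\alpha 0}]$. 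On the remaining piece $(\sigma_{\alpha 0},\sigma_{\rm max})$ --- which is non-empty exactly when $\sigma_{\alpha 0}<\sigma_{\rm max}$, and this is implied by $\sigma_\beta<\sigma_{\rm max}$ because $\sigma_\beta$ lies strictly between $\sigma_{\alpha 0}$ and $\sigma_{\alpha 1}$ --- one has $R_1'(\sigma)=\widehat{\lambda}(\sigma)/B_1>0$ by \eqref{eq:W*1} and $R_0'(\sigma)<0$ by \eqref{eq:R0'} and the discussion following it; hence $g'=R_1'-R_0'>0$ and $g$ is strictly increasing on $[\sigma_{\alpha 0},\sigma_{\rm max})$, with $g(\sigma_{\alpha 0})<0$ and with a continuous extension to $\sigma_{\rm max}$ whose value $g(\sigma_{\rm max})=R_1(\sigma_{\rm max})-R_0(\sigma_{\rm max})$ has, by the very computation that establishes the equivalence of \eqref{eq:R1>R0} and \eqref{eq:sb<smax}, the same sign as $\sigma_{\rm max}-\sigma_\beta$. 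Consequently, if $\sigma_\beta<\sigma_{\rm max}$ then strict monotonicity and the intermediate value theorem yield exactly one zero $\sigma_{\rm TM}\in(\sigma_{\alpha 0},\sigma_{\rm max})$, hence a unique treadmilling solution with $\ell_{\rm TM}=\bar{\ell}(\sigma_{\rm TM})$; if instead $\sigma_\beta\ge\sigma_{\rm max}$ then $g(\sigma_{\rm max})\le 0$ and, $g$ being strictly increasing, $g<0$ on all of $(\sigma_{\alpha 0},\sigma_{\rm max})$ --- the degenerate subcase $\sigma_{\alpha 0}\ge\sigma_{\rm max}$ being covered by the sign argument of the previous sentence applied to the entire admissible range --- so $g$ has no zero and no treadmilling solution exists. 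This proves that $\sigma_\beta<\sigma_{\rm max}$ is necessary and sufficient, and that the treadmilling solution, when it exists, is unique.

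For global stability, assume $\sigma_\beta<\sigma_{\rm max}$. The analysis above also fixes the sign of $g$ on the whole admissible interval: $g<0$ on $(-\infty,\sigma_{\rm TM})$ and $g>0$ on $(\sigma_{\rm TM},\sigma_{\rm max})$; in particular $g'(\sigma_{\rm TM})>0$, i.e.\ $R_1'(\sigma_{\rm TM})>R_0'(\sigma_{\rm TM})$, so the linearized criterion \eqref{eq:TMstable} holds and local stability is automatic. Passing to $\sigma=\bar{\sigma}(\ell)$, which by \eqref{eq:sl} is a smooth strictly decreasing bijection of $(0,\infty)$ onto $(-\infty,\sigma_{\rm max})$ with $\bar{\sigma}(\ell_{\rm TM})=\sigma_{\rm TM}$, the evolution \eqref{eq:lsdae} becomes the scalar autonomous equation $\dot{\ell}=G(\ell)$ with $G(\ell):=g(\bar{\sigma}(\ell))$, and the sign information transfers to $G(\ell)>0$ for $0<\ell<\ell_{\rm TM}$, $G(\ell_{\rm TM})=0$, $G(\ell)<0$ for $\ell>\ell_{\rm TM}$. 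A standard phase-line / Lyapunov argument now finishes the proof: taking $V(\ell)=(\ell-\ell_{\rm TM})^2$ one has $\dot{V}=2(\ell-\ell_{\rm TM})G(\ell)<0$ for every $\ell\neq\ell_{\rm TM}$, while every solution is monotone and trapped in the compact interval with endpoints $\ell(0)$ and $\ell_{\rm TM}$, so (under the regularity assumed of $W$, which makes $G$ locally Lipschitz) it exists for all $t\ge 0$, never leaves $(0,\infty)$, and converges to the unique equilibrium $\ell_{\rm TM}$. Thus $\ell(t)\to\ell_{\rm TM}$ from every admissible initial state, and correspondingly $\sigma(t)\to\sigma_{\rm TM}$ and all remaining fields converge to their treadmilling values. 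The main obstacle is the one just flagged: $g=R_1-R_0$ is \emph{not} monotone on the full admissible interval, since $R_0$ fails to be monotone on $(-\infty,\sigma_{\alpha 0})$, so neither uniqueness nor global convergence can be read off monotonicity alone; the resolution is the sign argument confining every zero of $g$ to the sub-interval $(\sigma_{\alpha 0},\sigma_{\rm max})$, where $g$ \emph{is} strictly increasing, and it is exactly this confinement that relies on the standing assumption $M_{B,0}<M_{B,1}$. A secondary, minor point is to carry the strict inequality $\sigma<\sigma_{\rm max}$ (equivalently $\ell>0$) along throughout, both to stay in the domain of $R_0$ and to ensure no trajectory reaches $\ell=0$.
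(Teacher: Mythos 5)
Your proof is correct and follows essentially the same route as the paper's: confine the analysis to $(\sigma_{\alpha 0},\sigma_{\rm max})$ where $R_0$ is decreasing and $R_1$ increasing, use the sign of $R_1-R_0$ at the endpoints (equivalent to the sign of $\sigma_{\rm max}-\sigma_\beta$) together with the intermediate value theorem and monotonicity for existence and uniqueness, and read global stability off the sign of $\dot\ell$ on either side of $\sigma_{\rm TM}$. Your explicit exclusion of zeros on $(-\infty,\sigma_{\alpha 0}]$ and the phase-line/Lyapunov packaging make the same argument slightly more airtight but do not change its substance.
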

\begin{proof}
To show sufficiency,  assume $\sigma_\beta < \sigma_{\rm max}$ holds. Since $\sigma_\beta$ lies between $ \sigma_{\alpha 0}$ and $\sigma_{\alpha 1}$ it now follows that in the present case
\begin{equation}
\label{eq:stressineq}
  \sigma_{\alpha 0} < \sigma_\beta < \sigma_{\rm max} < \sigma_{\rm asymp}.
\end{equation}
Moreover, as observed in section \ref{ss:DAEaccr}, in the interval $\sigma_{\alpha 0} < \sigma < \sigma_{\rm max}$
  \begin{itemize}
    \item $R_0(\sigma)$ is negative,
    \item $R_0(\sigma)$ is monotonically decreasing,
    \item $R_1(\sigma)$ is monotonically increasing,
    \item $R_1(\sigma_{\alpha 0}) < R_0(\sigma_{\alpha 0}) = 0$,
    \item $R_1(\sigma_{\rm \max}) > R_0(\sigma_{\rm \max}) $.
  \end{itemize}
 Given the continuity of $R_0(\sigma)$ and $R_1(\sigma)$, and because their difference has opposite signs at the extremes of the interval, it follows that they have the same value at some $\sigma = \sigma_{\rm TM}$ in this interval. Uniqueness follows from monotonicity. The negative sign of $R_0(\sigma_{\rm TM})$ implies that $R_1$ too is negative at $\sigma_{\rm TM}$, and therefore the treadmilling force lies in the interval $\sigma_{\alpha 0} < \sigma_{\rm TM} < \min(\sigma_{\alpha 1},\sigma_{\rm max})$. Moreover, since $\dot x_0 = \dot x_1 = R_0(\sigma_{\rm TM}) = R_1(\sigma_{\rm TM}) <0$ at treadmilling, accretion takes place at $Y_0$ and ablation at $y_1$. From monotonicity we also additionally infer that condition \eqref{eq:TMstable} is met and that the unique treadmilling solution is always stable. In fact it is globally stable because the rate $\dot{\ell} = R_1(\sigma) - R_0(\sigma)$ is always negative for all $\sigma < \sigma_{\rm TM}$, i.e. for $\ell > \ell_{\rm TM}$, and vice versa $\dot{\ell} > 0$ for $\sigma_{\rm TM} < \sigma < \sigma_{\rm max}$.

 To show necessity,  assume conversely that $\sigma_\beta \geq \sigma_{\rm max}$. The case $\sigma_\beta = \sigma_{\rm max}$, i.e. $R_1(\sigma_{\rm max}) = R_0(\sigma_{\rm max})$, corresponds to a non admissible treadmilling solution in which the force $\sigma_{\rm TM}$ equals $\sigma_{\rm max}$ and, according to \eqref{eq:ls}, the bar has zero reference length $\ell$.  

 Consider the case $\sigma_\beta > \sigma_{\rm max}$. According to \eqref{eq:R>0}, there is no intersection in the interval $\sigma < \min(\sigma_{\alpha 0},\sigma_{\rm max})  < \sigma_{\alpha 1}$, because $R_0(\sigma)$ is positive and $R_1(\sigma)$ negative. This ends the proof if $\sigma_{\alpha 0} \geq \sigma_{\rm max}$. If $\sigma_{\alpha 0} < \sigma_{\rm max}$, we consider the interval $\sigma_{\alpha 0} \leq \sigma < \sigma_{\rm max}$: both $R_0(\sigma)$ and $R_1(\sigma)$ are monotonic and $R_0$ is above $R_1$ at both ends of the interval. Hence any intersection is excluded also in this case.

\end{proof}

An example of functions $R_0$ and $R_1$ in the case $\sigma_{\alpha 0} < \sigma_{\alpha 1}$ is shown in Figure~\ref{fig:tm1}.
\begin{figure}[h]
\begin{center}
\includegraphics[width=0.75\linewidth]{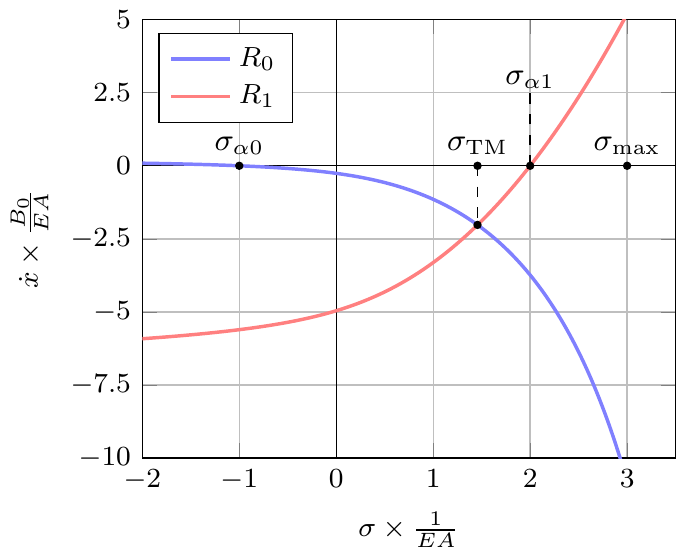}
\end{center}
\caption{\footnotesize Example of treadmilling solution obtained using $W^*(\sigma)$ from eq.~\eqref{eq:W*ex2} with parameters  $\sigma_{\rm asymp} = 5 EA$, $\sigma_{\rm max} = 3 EA$, $\sigma_{\alpha 0} = - EA$, $\sigma_{\alpha 1} = 2 EA$, $\beta = 1.0$.}
\label{fig:tm1}
\normalsize
\end{figure}
There we see that at $\sigma = \sigma_{\rm max}$, condition \eqref{eq:R1>R0}, is met and therefore a unique, globally stable, treadmilling solution exists. This result corroborates the globally stable equilibrium observed numerically by \cite{Abi-AklAbeyaratne2019} in a similar setting.

Once the value of the force $\sigma_{\rm TM}$ at treadmilling is known, it is possible to reconstruct the whole system at treadmilling. The corresponding stretch $\lambda_{\rm TM}$ is given by $\lambda_{\rm TM} = \widehat{\lambda}(\sigma_{\rm TM})$; the growth rates at the two ends are $\dot{x}_0^{\rm TM} = \dot{x}_1^{\rm TM} = R_0(\sigma_{\rm TM}) = R_1(\sigma_{\rm TM})$; the length of the body in reference space is $\ell_{\rm TM} = \bar{\ell}(\sigma_{\rm TM})$; the length of the body in physical space is $\lambda_{\rm TM} \ell_{\rm TM}$; and the chemical potential at the growing end is $\mu_0^{\rm TM} = M_1 + \varrho (\sigma_{\rm max} - \sigma_{\rm TM}) \dot{x}_0^{\rm TM}/(K m)$.

If \eqref{eq:sa0<sa1} holds but \eqref{eq:sb<smax} does not, then differently from the situation represented in Figure \ref{fig:tm1}, the value of $\sigma_{\rm max}$ is smaller than $\sigma_{\rm TM}$, the point where $R_0$ and $R_1$ intersect, and so  there is no treadmilling solution. According to \eqref{eq:lsdae1}, we have $\dot \ell$  negative everywhere in the admissible domain, so  if the bar has some nonzero length at the initial instant, then it progressively loses all of its monomers till it reaches $\ell = 0$ and $\sigma= \sigma_{\rm max}$.

It is interesting to represent the evolution in space (Figure~\ref{fig:tm-eny}) and time (Figure~\ref{fig:tment}) of the energy of a material unit of actin as it undergoes treadmilling.

The energy $e$ of a material unit is comprised of its chemical potential $\mu$, elastic strain energy $W$ and the potential energy of the stress $\sigma \lambda$: $e = \rho \mu + W - \sigma \lambda = \rho \mu - W^*$. In Figures~\ref{fig:tm-eny} and  ~\ref{fig:tment} the points $(f)$ and $(a)$ correspond to states just before and after accretion, while $(d)$ and $(e)$ refer to just before and after ablation. The energy of a free monomer at $Y_0$, just before accretion, is $e_{f} = \rho \mu_0$, and just after accretion, when it is bound to the solid, its energy is $e_a = \rho M_{B,0} - W^*(\sigma_0)$. The difference between these two energies is precisely the driving force at $Y_0$: $F_0 = e_f - e_a$. The dissipation inequality requires $F_0 V_0 = - F_0 \dot x_0 \geq 0$. When accretion takes place $\dot x_0 <0$ and therefore $F_0 \geq 0$ or equivalently $e_a \leq e_f$.  Likewise the energy of a material unit at $y_1$, when it is still bound to the solid, is $e_d = \rho M_{B,1} - W^*(\sigma_1)$ and when it is free after ablation it is $e_e =\rho \mu_1$. The corresponding driving force is $F_1 =e_e- e_d$ and the dissipation inequality requires $F_1V_1 = F_1 \dot x_1 \geq 0$. When ablation takes place at $y_1$, $\dot x_1 < 0$ and so $F_1 \leq 0$ and $e_e \leq e_d$.

\begin{figure}[h]
\footnotesize
\begin{center}
\includegraphics[width=0.75\linewidth]{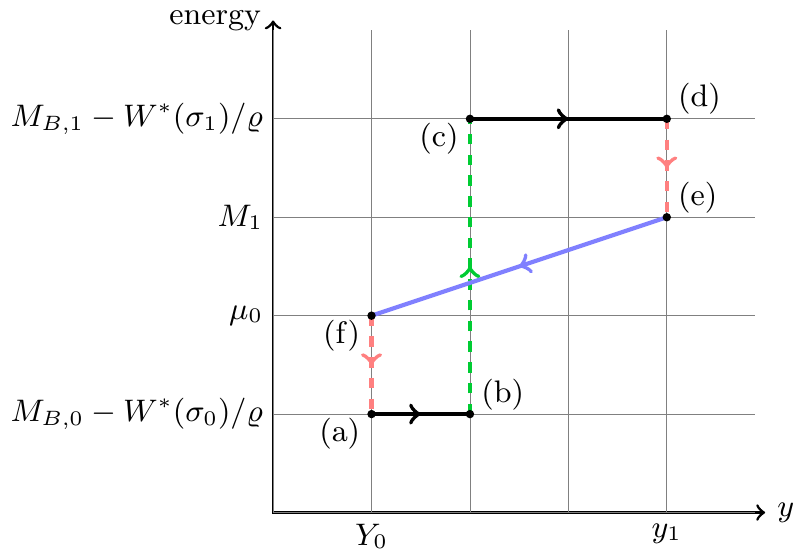}
\caption{Evolution in space of the energy of a mole of actin as it undergoes treadmilling. Blue represents diffusion, red accretion/ablation and green ATP-hydrolysis.} \label{fig:tm-eny}
\end{center}
\normalsize
\end{figure}
A unit of ATP-actin when it is a free monomer in the monomer pool and is located at the right-hand end of the bar corresponds to point $(e)$ in Figure~\ref{fig:tm-eny}.  As it diffuses through the bar it goes from $(e) \to (f)$ following the blue path, and eventually  arrives at the left-hand end of the bar. Accretion then takes place as described in the preceding paragraph and the material unit follows the path $(f) \to (a)$ following the red dashed line, losing energy in the process. The actin unit that got attached to the solid at $(a)$ is now progressively pushed outwards along the bar (due to the accretion of additional material at the left end) and moves towards the right-hand end. This corresponds to $(a) \to (b) \to (c) \to (d)$ with the green segment $(b) \to (c)$ being associated with the hydrolysis step where ATP is converted to ADP which has a higher chemical potential. The location along the bar at which hydrolysis takes place is not determined in our model.  Once this ADP-actin unit reaches the right-hand end  at $(d)$ it undergoes ablation and is detached from the solid following the red dashed line accompanied by an energy loss. It has now returned to the starting point $(e)$  and the process starts again.
\begin{figure}[h]
\footnotesize
\begin{center}
\includegraphics[width=0.75\linewidth]{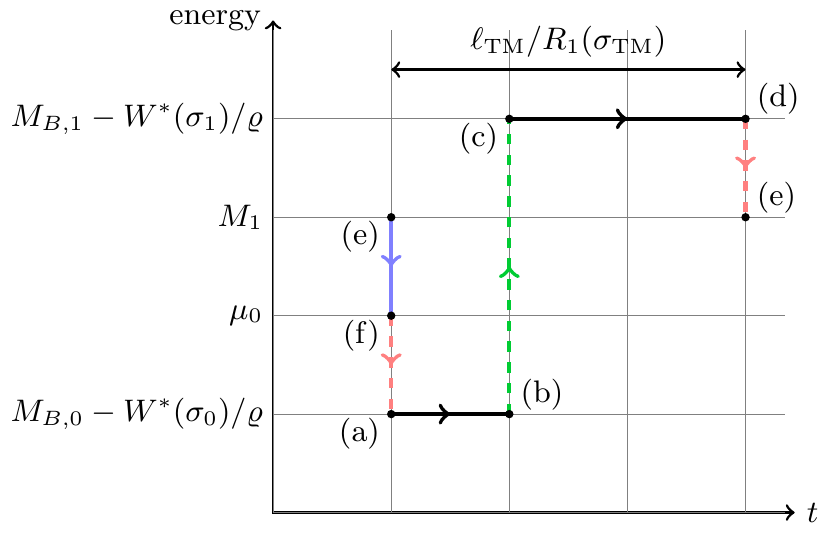}
\caption{Evolution in time of the energy of a mole of actin as it undergoes treadmilling. Blue represents diffusion, red accretion/ablation and green ATP-hydrolysis.}\label{fig:tment}
\end{center}
\normalsize
\end{figure}
The same evolution is represented in Figure~\ref{fig:tment}, with the difference that, consistently with the model assumptions, the instantaneous diffusion segment is vertical.


\subsubsection{\label{sss:TM0>1}  Case $M_{B,0}$ larger than $M_{B,1}$.}

We now discuss the case in which at treadmilling accretion takes place at the free end $y_1$, and ablation at the fixed end $Y_0$. This situation could resemble the experimental set-ups of \cite{ParekhChaudhuri2005} and \cite{ChaudhuriParekh2007}, though treadmilling was not investigated there.

Suppose that
\begin{equation} \label{eq:MB0>MB1}
  M_{B,0} > M_{B,1},
\end{equation}
that is, ATP-actin is at $y_1$ and the hydrolyzed ADP-actin at $Y_0$. From the definition \eqref{eq:salpha} and from the monotonicity \eqref{eq:W*1} of $W^*$, it follows that \eqref{eq:MB0>MB1} holds if and only if
\begin{equation}\label{eq:sa0>sa1}
  \sigma_{\alpha 0} > \sigma_{\alpha 1} \,.
\end{equation}

Under the above condition, it is possible to prove that,
\begin{prop}[Existence] \label{prop-2} Given that \eqref{eq:MB0>MB1} (equivalently \eqref{eq:sa0>sa1}) holds,
  $\sigma_\beta < \sigma_{\rm max}$ is a sufficient condition for the existence of a treadmilling solution.
\end{prop}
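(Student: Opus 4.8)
The plan is to produce the treadmilling force $\sigma_{\rm TM}$ as a zero of $g(\sigma):=R_1(\sigma)-R_0(\sigma)$ lying in the admissible range $\sigma<\sigma_{\rm max}$, by a bare application of the intermediate value theorem; no monotonicity of $R_0$ is available or needed. The essential difference from Proposition~\ref{prop-1} is that now $\sigma_\beta$ lies strictly between $\sigma_{\alpha 1}$ and $\sigma_{\alpha 0}$ with $\sigma_{\alpha 1}<\sigma_{\alpha 0}$, so the relevant window sits to the \emph{left} of $\sigma_{\alpha 0}$, precisely the region in which Section~\ref{ss:DAEaccr} warns that $R_0$ need not be monotone. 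This is why only existence, and not uniqueness or global stability, can be expected.

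First I would fix the working interval $I=[\sigma_{\alpha 1},\,\min(\sigma_{\alpha 0},\sigma_{\rm max})]$. Monotonicity of $W^*$ in \eqref{eq:sb} forces $\sigma_{\alpha 1}<\sigma_\beta<\sigma_{\alpha 0}$, and the hypothesis gives $\sigma_\beta<\sigma_{\rm max}$, so $\sigma_\beta$ lies in the interior of $I$ and $I$ is a genuine (nondegenerate) interval. Since $\sigma_{\rm asymp}-\sigma_{\rm max}=\Delta\sigma>0$ by \eqref{eq:sasym}, the singularity of $R_0$ at $\sigma_{\rm asymp}$ lies strictly to the right of $I$, so $R_0$, $R_1$, and hence $g$, are continuous on $I$.

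Next I would evaluate $g$ at the two endpoints. At the left endpoint, $g(\sigma_{\alpha 1})=-R_0(\sigma_{\alpha 1})<0$: indeed $R_1(\sigma_{\alpha 1})=0$ by definition \eqref{eq:salpha}, while $\sigma_{\alpha 1}<\sigma_{\alpha 0}$ together with $\sigma_{\alpha 1}<\sigma_{\rm max}$ gives $R_0(\sigma_{\alpha 1})>0$ by \eqref{eq:R>0}. At the right endpoint two sub-cases arise. If $\sigma_{\alpha 0}\le\sigma_{\rm max}$, the endpoint is $\sigma_{\alpha 0}$ and $g(\sigma_{\alpha 0})=R_1(\sigma_{\alpha 0})>0$, since $R_1$ vanishes only at $\sigma_{\alpha 1}$ and is increasing while $\sigma_{\alpha 0}>\sigma_{\alpha 1}$. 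If $\sigma_{\alpha 0}>\sigma_{\rm max}$, the endpoint is $\sigma_{\rm max}$ and $g(\sigma_{\rm max})=R_1(\sigma_{\rm max})-R_0(\sigma_{\rm max})>0$, which is exactly inequality \eqref{eq:R1>R0}, already noted to be equivalent to $\sigma_\beta<\sigma_{\rm max}$. In either case $g<0$ at the left end and $g>0$ at the right end, so the intermediate value theorem yields $\sigma_{\rm TM}$ in the interior of $I$ with $R_0(\sigma_{\rm TM})=R_1(\sigma_{\rm TM})$; as $\sigma_{\rm TM}<\sigma_{\rm max}$, relation \eqref{eq:ls} gives $\ell_{\rm TM}=\bar{\ell}(\sigma_{\rm TM})>0$, so this is an admissible treadmilling state.

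Finally I would note that the solution so located automatically has the character advertised before the statement: since $\sigma_{\alpha 1}<\sigma_{\rm TM}<\sigma_{\alpha 0}$ and $\sigma_{\rm TM}<\sigma_{\rm max}$, \eqref{eq:R>0} gives $\dot x_0=R_0(\sigma_{\rm TM})>0$ and $\dot x_1=R_1(\sigma_{\rm TM})>0$, i.e.\ ablation at the fixed end $Y_0$ and accretion at the free end $y_1$. A cruder version of the same argument, on $(-\infty,\sigma_{\rm max}]$ with $g(\sigma)\to-\infty$ as $\sigma\to-\infty$ (because $R_1(\sigma)\to-\infty$ by \eqref{eq:W*6} and $R_0(\sigma)\to 0$), already delivers bare existence. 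I do not anticipate any genuine obstacle; the only care needed is the endpoint bookkeeping in the split $\sigma_{\alpha 0}\le\sigma_{\rm max}$ versus $\sigma_{\alpha 0}>\sigma_{\rm max}$ and keeping $I$ clear of the asymptote at $\sigma_{\rm asymp}$ — and being explicit that, because $R_0$ fails to be monotone on $(-\infty,\sigma_{\alpha 0})$, the stronger conclusions of Proposition~\ref{prop-1} are deliberately not claimed here.
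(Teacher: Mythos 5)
Your proposal is correct and follows essentially the same route as the paper: an intermediate value theorem argument for $R_1-R_0$ on the interval from $\sigma_{\alpha 1}$ up to (roughly) $\sigma_{\rm max}$, using $R_1(\sigma_{\alpha 1})=0<R_0(\sigma_{\alpha 1})$ at the left end and the equivalence of $\sigma_\beta<\sigma_{\rm max}$ with \eqref{eq:R1>R0} at the right end, with the same concluding remarks on the sign of $\dot x_0=\dot x_1$ and on why uniqueness is not claimed. Your extra bookkeeping (the case split at $\min(\sigma_{\alpha 0},\sigma_{\rm max})$ and the explicit check that $\sigma_{\rm asymp}$ lies outside the interval) only makes explicit what the paper leaves implicit.
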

\begin{proof}
  Assume that $\sigma_\beta < \sigma_{\rm max}$ holds.
  Then since $\sigma_{\alpha 1} < \sigma_\beta < \sigma_{\alpha 0}$ it follows that $\sigma_{\alpha 1} < \sigma_{\rm max}$.
  Moreover, as observed in section \ref{ss:DAEaccr}, in the interval $\sigma_{\alpha 1} < \sigma < \sigma_{\rm max}$
  \begin{itemize}
    \item $R_1(\sigma)$ is positive,
    \item $R_0(\sigma_{\alpha 1}) > 0 = R_1(\sigma_{\alpha 1})$  since $\sigma_{\alpha 1} < \sigma_{\alpha 0}$, and
    \item $R_1(\sigma_{\rm max}) > R_0(\sigma_{\rm max})$ by \eqref{eq:R1>R0} and \eqref{eq:sb<smax}.
  \end{itemize}
  Given the continuity of $R_0(\sigma)$ and $R_1(\sigma)$, the two functions certainly intersect at least once between $\sigma_{\alpha 1}$ and $\sigma_{\rm max}$ because their difference has opposite signs at the extremes of the interval.  Positiveness of $R_1(\sigma)$, implies that both $R_0(\sigma_{\rm TM})$ and $R_1(\sigma_{\rm TM})$ are positive, and so $\dot x_0 = \dot x_1 >0$ corresponding to ablation at $Y_0$ and accretion at $y_1$. Such a treadmilling state may only exist in the interval $\sigma_{\alpha 1} < \sigma_{\rm TM} < \min(\sigma_{\alpha 0},\sigma_{\rm max})$.
  The solution need not be unique because $R_0$ is not necessarily monotonic for $\sigma < \sigma_{\alpha 0}$.

\end{proof}



An example of functions $R_0$ and $R_1$ in the case $\sigma_{\alpha 0} > \sigma_{\alpha 1}$ is shown in Figure~\ref{fig:tm2}.
\begin{figure}[ht]
\begin{center}
\includegraphics[width=0.75\linewidth]{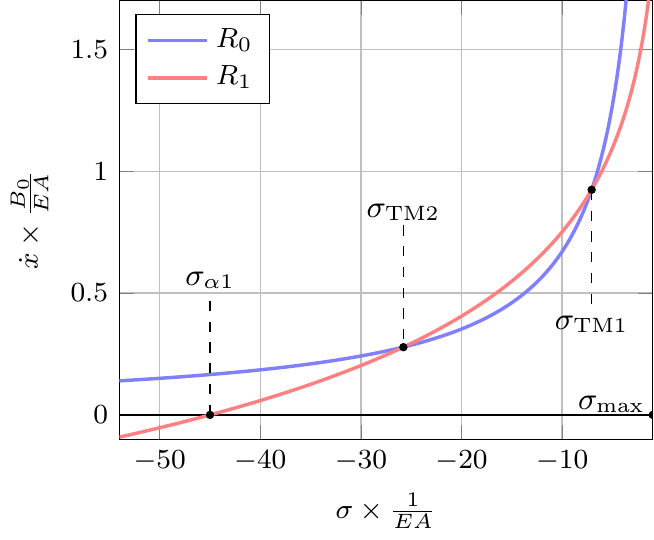}
\end{center}
\caption{\footnotesize Example of treadmilling solution obtained using $W^*(\sigma)$ from eq.~\eqref{eq:W*ex2} with parameters  $\sigma_{\rm asymp} = 0$, $\sigma_{\rm max} = - EA$, $\sigma_{\alpha 0} = 2 EA$, $\sigma_{\alpha 1} = -45 EA$, $\beta = 1.0$. }
\label{fig:tm2}
\normalsize
\end{figure}
The figure has been drawn for a case in which $\sigma_{\rm asymp} < \sigma_{\alpha 0}$ and in which, correspondingly, both $R_0$ and $R_1$ are monotonically increasing in the domain $\sigma < \sigma_{\rm max}$. Notice that multiple treadmilling solutions exist despite conditions \eqref{eq:R1>R0}, \eqref{eq:sb<smax} not being met, because in the current case, $\sigma_{\alpha 1} < \sigma_{\alpha 0}$, such conditions are sufficient but not necessary for the existence of treadmilling solutions. According to the local stability criterion \eqref{eq:TMstable}, the treadmilling state characterized by force $\sigma_{{\rm TM} 1}$ is unstable while the other at force $\sigma_{{\rm TM} 2}$ is stable. The basin of attraction of the latter solution is any initial condition corresponding to a force $\sigma < \sigma_{{\rm TM} 1}$. A bar whose initial length $\ell(0) < \ell_{{\rm TM} 1}$, i.e. $\sigma(0) > \sigma_{{\rm TM} 1}$, will have negative $\dot{\ell}(t)$ and therefore progressively lose all of its monomers  and approach zero length and force $\sigma_{\rm max}$.


\section{\label{sec:disc} Discussion of the results}

\subsection{\label{ss:ellM} Variation of length with chemical potential}

It is illuminating to interpret the main results of this study, Propositions \ref{prop-1} and \ref{prop-2}, in terms of the chemical potential $M_1$ of free monomers and the referential length $\ell$ of the bar.  In a laboratory experiment one can imagine varying $M_1$ and observing how $\ell$ changes.

\begin{figure}[h]
\centerline{\includegraphics[width=0.75\linewidth]{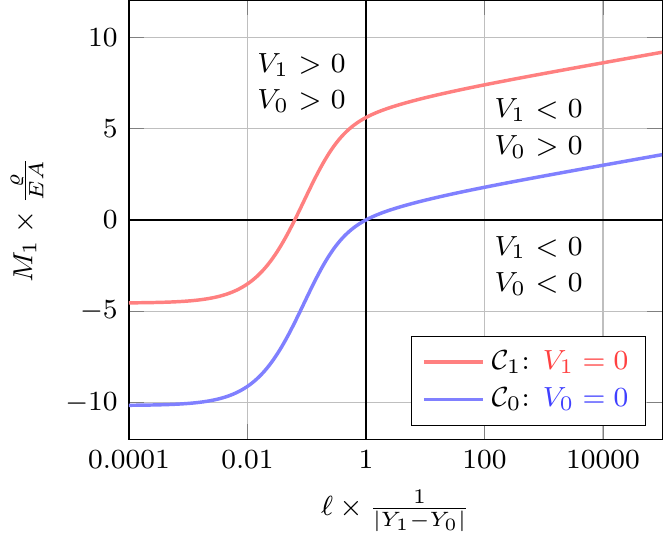}}

\caption{
The $\ell, M_1$-plane in the case $M_{B,1}> M_{B,0}$. The growth rates $V_0$ and $V_1$ vanish on the respective curves ${\cal C}_0$ and ${\cal C}_1$. These curves demarcate 3 regions of the $\ell, M_1$-plane where the signs of the growth velocities  are as shown. $M_{B,1} = 0$. Other parameters as in Figure~\ref{fig:tm1}. }
\label{fig:LM1a}
\end{figure}


%
%

First consider how each end of the bar grows in different regions of the $\ell, M_1$-plane. The curves ${\cal C}_0$ and ${\cal C}_1$ at which $V_0=0$ and $V_1=0$ are shown in Figure \ref{fig:LM1a} in the case
$M_{B,1}> M_{B,0}$ (the case of principal interest). By \eqref{eq:asub} and \eqref{eq:sl} they are characterized by
\begin{equation}
\label{eq:0502-eq4}
{\cal C}_0: \ M_1 = M_{B,0} - W^*(\overline{\sigma}(\ell))/\varrho, \qquad
{\cal C}_1: \ M_1 = M_{B,1} - W^*(\overline{\sigma}(\ell))/\varrho.
\end{equation}
That these curves have the monotonicity depicted in the figure follows from the properties of $\overline{\sigma}(\ell)$ and $W^*(\sigma)$ which tell us that $- W^*(\overline{\sigma}(\ell))$ increases monotonically from $W^*(\sigma_{\rm max})$ to $+ \infty$ as $\ell$ increases from $\ell =0$.  These curves demarcate 3 regions of the $\ell, M_1$-plane where the signs of the growth velocities $V_0 = - \dot x_0$ and $V_1 = \dot x_1$ are as shown. Observe that if the bar is sufficiently long or the chemical potential of the free monomers is sufficiently small, corresponding to points on the right of ${\cal C}_0$, ablation happens at both ends of the bar ($V_1<0, V_0 < 0$) and it will grow shorter. On the other hand on the left of ${\cal C}_1$, where the bar is sufficiently short or the chemical potential is sufficiently large, accretion happens at both ends ($V_1>0, V_0 > 0$) and the bar will grow longer. Between the two curves ${\cal C}_0$ and ${\cal C}_1$, where the length $\ell$ of the bar and the chemical potential $M_1$ have intermediate values, accretion occurs at the left-hand end ($V_0 > 0$) and ablation occurs at the right-hand end ($V_1<0$).

\begin{figure}[h]
\centerline{\includegraphics[width=1.0\linewidth]{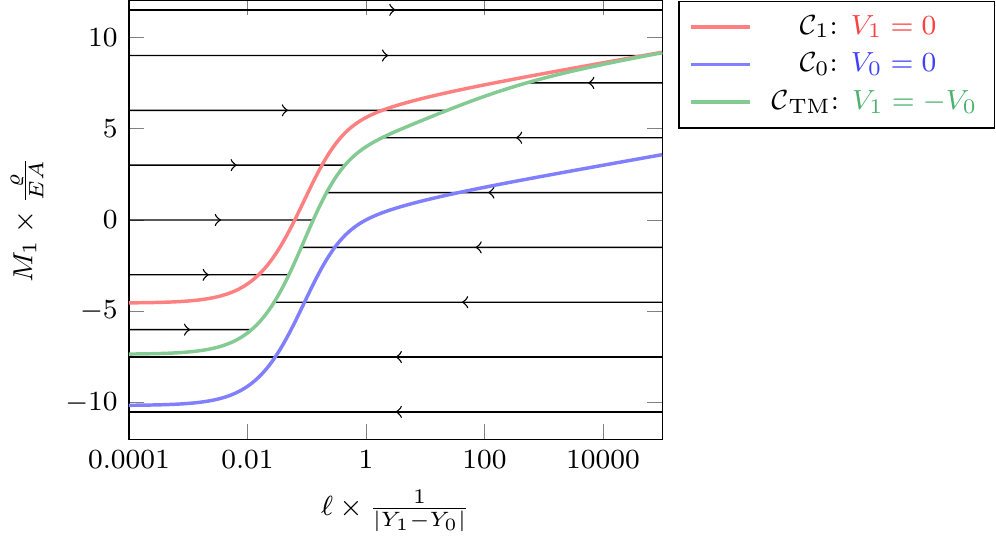}}

\caption{
A point $(\ell, M_1)$ on the curve ${\cal C}_{\rm TM}$ corresponds to a treadmilling state. The arrow at an arbitrary point $(\ell, M_1)$ tells us whether $\dot\ell$ is positive or negative there thus indicating whether the length of the bar increases or decreases. Observe from the figure that a treadmilling solution exists whenever \eqref{eq:0502-eq1} holds. $M_{B,1} = 0$. Other parameters as in Figure~\ref{fig:tm1}.}

\label{fig:LM1b}
\end{figure}


The curve ${\cal C}_{\rm TM}$ corresponding to treadmilling is found by setting $V_0 = - V_1 \, (\neq 0)$, i.e. $\dot{x}_0 = \dot{x}_1$, which from \eqref{eq:asub} is found to be described by
\begin{equation}
\label{eq:0502-eq2}
{\cal C}_{\rm TM}: \  M_1 = \overline{M}_1(\ell) \coloneqq  \frac{M_{B,0} + \beta\left[ 1 + \frac{\rho^2}{mB_0K}(\sigma_{\rm max} - \overline{\sigma}(\ell)))\right] M_{B,1}}{ 1 + \beta\left[ 1 + \frac{\rho^2}{mB_0K}(\sigma_{\rm max} - \overline{\sigma}(\ell)))\right]} \ - \frac{1}{\varrho} W^*(\overline{\sigma}(\ell)).
\end{equation}
It is clear from \eqref{eq:0502-eq4} and \eqref{eq:0502-eq2} that ${\cal C}_{\rm TM}$ necessarily lies between the two curves ${\cal C}_0$ and ${\cal C}_1$ as shown in Figure \ref{fig:LM1b}.
The monotonicity of ${\cal C}_{\rm TM}$ depicted in the figure follows from simple calculations, the properties of $W^*$ and $\overline{\sigma}$ and the positivity of $M_{B,1} - M_{B,0}$.
By setting $\ell=0$ in \eqref{eq:0502-eq2} we see that ${\cal C}_{\rm TM}$ cuts the $M_1$-axis at
\begin{equation}
\label{eq:0503-eq1}
M_1 = \overline{M}_1(0) = \frac{M_{B,0} + \beta M_{B,1}}{ 1 + \beta} \ - W^*({\sigma}_{\rm max})/\varrho.
\end{equation}
It follows that, corresponding to any given value of the chemical potential
\begin{equation}
\label{eq:0502-eq1}
M_1 > \overline{M}_1(0) = \frac{M_{B,0} + \beta M_{B,1}}{ 1 + \beta} \ - W^*({\sigma}_{\rm max})/\varrho,
\end{equation}
there is a corresponding length of the bar $\ell = \ell_{\rm TM} > 0$ such that $(\ell_{\rm TM}, M_1)$ lies on ${\cal C}_{\rm TM}$, or stated differently, a unique treadmilling solution exists whenever \eqref{eq:0502-eq1} holds. It is not difficult to show that the inequality \eqref{eq:0502-eq1} is equivalent to the inequality \eqref{eq:sb<smax} (which in turn is equivalent to \eqref{eq:R1>R0}).

The value of $\dot\ell$ corresponding to each point $(\ell, M_1)$ of the $\ell, M_1$-plane can be calculated using \eqref{eq:lsdae1} and $\sigma= \overline\sigma(\ell)$ from \eqref{eq:sl}. The arrows in Figure \ref{fig:LM1b} indicate whether this value of $\dot\ell$ is positive or negative. If $M_1$ is held constant (as we have done) then the system starting out initially at some point of the  $\ell, M_1$-plane will follow the $M_1={\rm constant}$ line through that point in the direction indicated by the arrows. The stability of the treadmilling solution is evident.

If the chemical potential of the free monomers does not obey \eqref{eq:0502-eq1}, i.e. if
\begin{equation}
M_1 < \overline{M}_1(0) = \frac{M_{B,0} + \beta M_{B,1}}{ 1 + \beta} \ - W^*({\sigma}_{\rm max})/\varrho,
\end{equation}
we see from Figure \ref{fig:LM1b} that $(a)$ there is no corresponding treadmilling state, and $(b)$ if the bar has some positive length at the initial instant, it will mononotically get shorter and eventually disappear.

Now consider the case $M_{B,1} < M_{B,0}$.
Figure \ref{fig:LM1c} shows the $\ell, M_1$-plane and the signs of the growth velocities on different regions of it. The treadmilling curve, still characterized by \eqref{eq:0502-eq2}, has the properties that $(a)$ it lies between ${\cal C}_0$ and ${\cal C}_1$, $(b)$ it passes through $(0, \overline{M}_1(0))$ where $\overline{M}_1(0)$ continues to be given by
\eqref{eq:0503-eq1}, and $(c)$ $\overline{M}_1(\ell) \to \infty$ as $\ell \to \infty$.  Thus it is clear that there necessarily exists a treadmilling solution corresponding to any $M_1 > \overline{M}_1(0)$ which illustrates the result in Proposition \ref{prop-2}. However the curve ${\cal C}_{\rm TM}$ is not necessarily monotonically rising and so $(i)$ there may exist  multiple treadmiling states corresponding to  a given value of $M_1$ and $(ii)$ there may exist treadmiling states for values of $M_1$ in the range $M_{B,1} - W^*(\sigma_{\rm max})/\varrho < M_1 < \overline{M}_1(0)$.  Note from Figure \ref{fig:LM1c} that, since the treadmilling curve lies between ${\cal C}_0$ and ${\cal C}_1$, all such solutions involve accretion at the right-hand end and ablation at the left-hand in contrast to the case $M_{B,1} > M_{B,0}$ shown in Figure \ref{fig:LM1a}.

\begin{figure}[h]
\centerline{\includegraphics[width=0.75\linewidth]{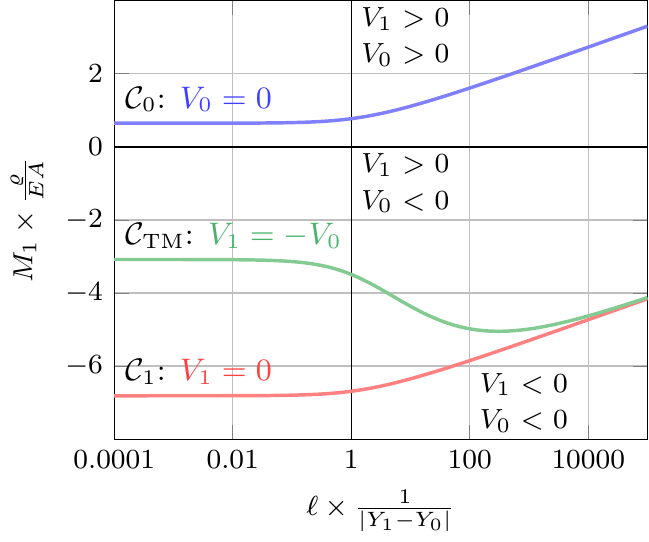}}

\caption{
The $\ell, M_1$-plane in the case $M_{B,1} < M_{B,0}$. The growth rates $V_0$ and $V_1$ vanish on the respective curves ${\cal C}_0$ and ${\cal C}_1$. $M_{B,1} = 0$. Other parameters as in Figure~\ref{fig:tm2}.}

\label{fig:LM1c}
\end{figure}


\subsection{\label{ss:fv} Approach to treadmilling: force-velocity curve}


While  the focus in Section \ref{sec:TM}  was on treadmilling states in which the length of the bar is stationary, the mathematical model constructed in Section \ref{sec:DAE}  can also describe the evolution of the system. For instance it is possible to follow the growth velocity in terms of the applied force as the system tends towards treadmilling. This force - velocity relationship is experimentally measured \citep{ParekhChaudhuri2005,BielingLi2016} and therefore relevant for model validation and discussion.

To calculate the growth velocity, we consider the positions of the two tips of the bar in physical space; they are $y_1(t)$ and $Y_0$, the latter being fixed.  The velocity of the moving tip is therefore $\dot{y}_1$. By differentiating $y_1(t) - Y_0 = \widehat{\lambda}(\sigma(t)) \ell(t)$ with respect to $t$ and using equations \eqref{eq:lsdae1} and \eqref{eq:lsdae2}, one can show that
\begin{equation}\label{eq:y1dot}
  \dot{y}_1 =\frac{\widehat\lambda^2(\sigma)}{\widehat\lambda(\sigma) + (\sigma_{\rm max} - \sigma) \widehat\lambda'(\sigma)} \, [R_1(\sigma) - R_0(\sigma)].
\end{equation}
We are now in the position to simulate an experiment.  For simplicity we limit attention to the case $Y_1=Y_0$ which means that when the spring is force-free, its free end is at the impermeable wall, and so by eq. \eqref{eq:smaxdef}, $\sigma_{\rm max} = 0$. Suppose that initially the bar has zero length $\ell(0)=0$ and the force in the spring vanishes, $\sigma(0)=0$. Thus $y_1(0) = Y_1 = Y_0$.  Equation~\eqref{eq:y1dot} is plotted in Figure \ref{fig:fv} displaying the evolution of the growth velocity  $\dot{y}_1$ as a function of the  compressive force applied by the spring $-\sigma$.  
\begin{figure}[h]
\begin{center}
\includegraphics[width=0.75\linewidth]{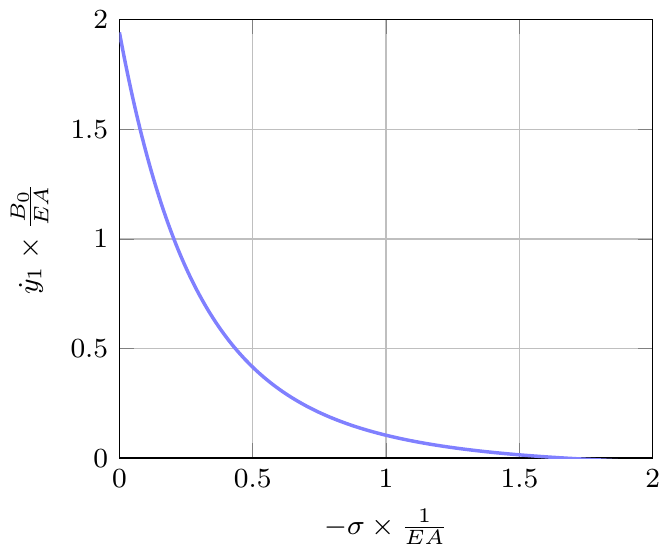}
\end{center}
\caption{\footnotesize Example of evolution of the growth velocity of the bar approaching treadmillinmg. $W^*(\sigma)$ from eq.~\eqref{eq:W*ex2} with parameters  $\sigma_{\rm asymp} = 2 EA$, $\sigma_{\rm max} = 0 EA$, $\sigma_{\alpha 0} = -  4 EA$, $\sigma_{\alpha 1} = -1 EA$, $\beta = 1.0$.}
\label{fig:fv}
\normalsize
\end{figure}
At the beginning, the force $\sigma$ is zero and the velocity is maximum. Then, as the bar grows, the spring is shortened and the compressive force in the bar increases. In turn, compression affects the growth velocity which progressively decreases, until it eventually reaches treadmilling and stops.

The curve in Figure~\ref{fig:fv} is convex. Experimental results in \cite{ParekhChaudhuri2005} in which the elastic restraint is represented by the cantilever tip of an atomic force microscope exhibit instead a concave curve which is approximately horizontal for low values of the force. The authors in \cite{ParekhChaudhuri2005} provide two possibile explanations for their results:  first, the remodeling and increased density of actin networks growing under increasing loads. Second, force-independent limitations to growth ``such as the inherent time required for polymerization of actin or nucleation of new filaments''. Both effects are currently not accounted in the present model.

A more recent paper by the same group obtains an experimental force-velocity curve \citep[see Figure 2B in][]{BielingLi2016} that is convex and qualitatively similar to the one in Figure~\ref{fig:fv}. However the curve in \cite{BielingLi2016} is obtained by measuring the steady-state growth velocity of an actin network subject to a step-wise series of increasing load forces. Therefore it does not reflect the evolution of a specimen towards treadmilling but the sampling of growth velocities of a specimen under different constant values of force.

Among other possible differences between the present model and the experimental setup in \cite{ParekhChaudhuri2005} and \cite{BielingLi2016} is the fact that in the experiments free monomers can diffuse from the sides of the specimen while in the present model actin monomers flow longitudinally along the bar whose length evolves in time.

Thanks to the many experimental results in the literature pertaining to the relatively simple one-dimensional context, e.g. \cite{ParekhChaudhuri2005} and \cite{BielingLi2016}, we plan to carry out appropriate refinements of the present work in the future.



\section{\label{sec:end}Conclusions}
In this study we have formulated and analyzed a one-dimensional, self-contained growth model for an actin bar fixed at one end and elastically constrained at the other. The model encompasses diffusion in a solvent surrounding or permeating the bar as well as growth conditions at the ends. The nonlinear elastic properties of actin are specified through an arbitrary convex strain energy density function.

Treadmilling states were investigated in which the length of the bar remains constant while accreting actin monomers at one end and ablating them at the opposite end at equal rates. The treadmilling state in which monomers accrete at the fixed end is found to be globally stable. Multiple treadmilling states in which monomers ablate at the fixed end are instead possible.

Conditions for existence and stability of treadmilling states were condensed into  relatively simple formulas, \eqref{eq:TMstable}, \eqref{eq:R1>R0} and \eqref{eq:sb<smax}, which are useful in understanding the influence of the different parameters of the model, and may help explain the results of more complex models. For instance, global stability of treamilling states accreting monomers at the fixed end can provide additional explanation to the numerical findings made by \cite{Abi-AklAbeyaratne2019} in a similar setting.

Further refinements of the model would be relevant for applications, especially experimental ones: accounting for the density and stiffness increase of actin under higher external load and developing analytical relationships between growth velocity and applied force are two of them. Finally, extension of the present work to the stability of two dimensional treadmilling states previously studied in \cite{TomassettiCohen:2016} is planned.

\section*{Acknowledgements}
R.A. and E.P. gratefully acknowledge the support of the MIT-FVG Seed Fund. The work of G.T. and E.P. was supported by the Italian PRIN 2017 project ``Mathematics of active materials: From mechanobiology to smart devices''.  E.P. thankfully acknowledges as well the support of the Italian National Group of Mathematical Physics (GNFM-INdAM). G.T. acknowledges the Grant of Excellence Departments, MIUR-Italy (Art.$1$, commi $314$-$337$, Legge $232$/$2016$).

\appendix
\section{\label{app:Wex} Examples of strain energy densities}

We provide here a couple of examples of strain energy density functions satisfying the requirements in eq.~\eqref{eq:Wass}. We add also some remarks on the asymptotic growth of the constitutive laws $\widehat{\sigma}(\lambda)$ for stress and $W^*(\sigma)$ for the complementary strain energy density function ensuing from the choices of the asymptotic dominant term of the strain energy density $W$.

\subsubsection*{Example 1: A rational function}

First consider the rational strain energy density
  \begin{equation} \label{eq:Wex1}
    W(\lambda) = \frac {EA} 6( \lambda^2 + 2 \lambda^{-1} -3) \, ,
  \end{equation}
where $EA>0$ is a constant. It follows that $\sigma = W'(\lambda) = \frac{EA}{3}(\lambda - \lambda^{-2}),  W''(\lambda) = \tfrac{EA}{3}(1 + 2 \lambda^{-3}) >0$ and $W''(1) = EA$. Notice that this $W$ obeys all assumptions in \eqref{eq:Wass}. Then the complementary energy $W^*$ can be obtained,
  \begin{equation*}
    W^* = \frac{EA}{6} \big[\lambda^2 - 4 \lambda^{-1} + 3\big]\ \Big|_{\lambda = \widehat{\lambda}(\sigma)} \,,
  \end{equation*}
  together with its derivatives,
  \begin{equation*}
    W^*\, '(\sigma) = \lambda \Big|_{\lambda = \widehat{\lambda}(\sigma)} >0
    \quad \text{and} \quad
    W^*\, ''(\sigma) = \frac{3}{EA(1+ 2\lambda^{-3})} \Big|_{\lambda = \widehat{\lambda}(\sigma)} >0 \,.
  \end{equation*}
  The closed form expression of $W^*$ is cumbersome, but we can easily look at its asymptotic behaviour when $\sigma \to \pm \infty$:
  \begin{equation*}
    W^* \sim \frac 32 \frac{\sigma^2}{EA}  \to +\infty \quad {\rm for } \ \sigma \to +\infty, \qquad
  W^* \sim  -\, \sqrt{ \frac 43 EA} \ \sqrt{-\sigma}  \to -\infty \quad {\rm for } \  \sigma \to - \infty
  \end{equation*}

  More generally, suppose that
  \begin{equation*}
    W(\lambda) \sim \alpha \lambda^n, \qquad \sigma \sim \alpha n \lambda^{n-1},
    \qquad \alpha >0, n >1, \qquad {\rm as} \  \lambda \to \infty.
  \end{equation*}
  Then
  \begin{equation*}
    \lambda \sim \left(\frac{\sigma}{\alpha n}\right)^{\frac{1}{n-1}}, \qquad W^* \sim \alpha (n-1)
    \left(\frac{\sigma}{\alpha n}\right)^{\frac{n}{n-1}} \to +\infty \qquad {\rm as} \ \sigma \to +\infty.
  \end{equation*}
  Similarly suppose
  \begin{equation*}
    W(\lambda) \sim \beta \lambda^{-m}, \qquad \sigma \sim - \beta m \lambda^{-m-1},  \qquad \beta >0, m > 0, \qquad {\rm as} \ \lambda \to 0.
  \end{equation*}
  Then
  \begin{equation*}
    \lambda \sim \left(\frac{\beta m}{-\sigma}\right)^{\frac{1}{m+1}}, \qquad W^* \sim - \beta (m+1)
    \left(\frac{-\sigma}{\beta m}\right)^{\frac{m}{m+1}} \to - \infty \qquad {\rm as} \ \sigma \to -\infty
  \end{equation*}

  \subsubsection*{Example 2: A closed form expression}
  A second example of a suitable strain energy density is
  \begin{equation} \label{eq:Wex2}
    W(\lambda) = \frac{EA}{2} \left( \frac{\lambda^2}{2} - \ln \lambda - \frac 12\right) \,, \qquad
    \sigma = W'(\lambda) = \frac{EA}{2}\left( \lambda - \lambda^{-1}\right) \;,
  \end{equation}
 where $EA>0$ is a constant. Associated with this $W$ one has, in closed form,
  \begin{equation} \label{eq:W*ex2}
    \widehat{\lambda}(\sigma) = \frac{\sigma}{EA} + \sqrt{ \frac{\sigma^2}{(EA)^2} + 1} \,; \quad W^*(\sigma) = \frac{EA}{2} \left( \frac{\widehat{\lambda}^2(\sigma)}{2} + \ln \widehat{\lambda}(\sigma) - \frac 12\right) \,.
  \end{equation}
  Also in this case we have that $W''(1) = EA$ and one can readily verify that $W$ satisfies all requirements in \eqref{eq:Wass}. Plots of $W(\lambda)$ and $W^*(\sigma)$ are shown in Figure~\ref{fig:W}.
  \begin{figure}[h]
  \begin{center}
  \includegraphics[width=0.9\linewidth]{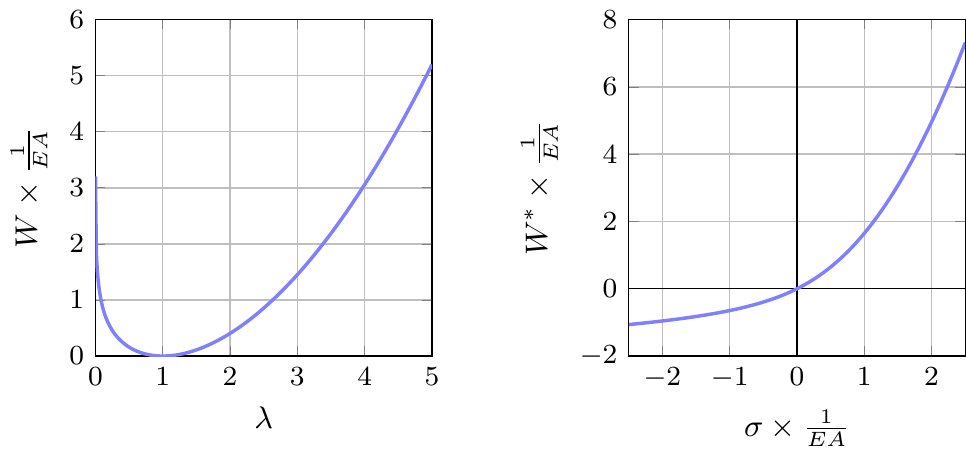}
  \end{center}
  \caption{\footnotesize Plot of $W(\lambda)$ in \eqref{eq:Wex2} and of $W^*(\sigma)$ in \eqref{eq:W*ex2}.}
  \label{fig:W}
  \normalsize
  \end{figure}

\noindent

\bibliographystyle{elsarticle-harv}

\bibliography{1dgrowth}

\begin{thebibliography}{29}
\expandafter\ifx\csname natexlab\endcsname\relax\def\natexlab#1{#1}\fi
\providecommand{\url}[1]{\texttt{#1}}
\providecommand{\href}[2]{#2}
\providecommand{\path}[1]{#1}
\providecommand{\DOIprefix}{doi:}
\providecommand{\ArXivprefix}{arXiv:}
\providecommand{\URLprefix}{URL: }
\providecommand{\Pubmedprefix}{pmid:}
\providecommand{\doi}[1]{\href{http://dx.doi.org/#1}{\path{#1}}}
\providecommand{\Pubmed}[1]{\href{pmid:#1}{\path{#1}}}
\providecommand{\bibinfo}[2]{#2}
\ifx\xfnm\relax \def\xfnm[#1]{\unskip,\space#1}\fi
\bibitem[{Abeyaratne and Knowles(1990)}]{AbeyaratneKnowles1990}
\bibinfo{author}{Abeyaratne, R.}, \bibinfo{author}{Knowles, J.K.},
  \bibinfo{year}{1990}.
\newblock \bibinfo{title}{On the driving traction acting on a surface of strain
  discontinuity in a continuum}.
\newblock \bibinfo{journal}{Journal of the Mechanics and Physics of Solids}
  \bibinfo{volume}{38}, \bibinfo{pages}{345--360}.
\newblock \DOIprefix\doi{10.1016/0022-5096(90)90003-M}.
\bibitem[{Abeyaratne and Knowles(1997)}]{AbeyaratneKnowles1997}
\bibinfo{author}{Abeyaratne, R.}, \bibinfo{author}{Knowles, J.K.},
  \bibinfo{year}{1997}.
\newblock \bibinfo{title}{A note on the driving traction acting on a
  propagating interface: Adiabatic and non-adiabatic processes of a continuum}.
\newblock \bibinfo{journal}{Journal of Applied Mechanics} \bibinfo{volume}{67},
  \bibinfo{pages}{829--830}.
\newblock \DOIprefix\doi{10.1115/1.1308577}.
\bibitem[{Abi-Akl et~al.(2019)Abi-Akl, Abeyaratne and
  Cohen}]{Abi-AklAbeyaratne2019}
\bibinfo{author}{Abi-Akl, R.}, \bibinfo{author}{Abeyaratne, R.},
  \bibinfo{author}{Cohen, T.}, \bibinfo{year}{2019}.
\newblock \bibinfo{title}{Kinetics of surface growth with coupled diffusion and
  the emergence of a universal growth path}.
\newblock \bibinfo{journal}{Proceedings of the Royal Society A: Mathematical,
  Physical and Engineering Sciences} \bibinfo{volume}{475},
  \bibinfo{pages}{20180465}.
\newblock \DOIprefix\doi{10.1098/rspa.2018.0465}.
\bibitem[{Alberts et~al.(2015)Alberts, Johnson, Lewis, Morgan and
  Raff}]{AlbertsJohnson2015}
\bibinfo{author}{Alberts, B.}, \bibinfo{author}{Johnson, A.},
  \bibinfo{author}{Lewis, J.}, \bibinfo{author}{Morgan, D.},
  \bibinfo{author}{Raff, M.}, \bibinfo{year}{2015}.
\newblock \bibinfo{title}{Molecular Biology of the Cell}.
\newblock \bibinfo{edition}{6} ed., \bibinfo{publisher}{Garland Science}.
\bibitem[{Bacigalupo and Gambarotta(2012)}]{BacigalupoGambarotta:2012}
\bibinfo{author}{Bacigalupo, A.}, \bibinfo{author}{Gambarotta, L.},
  \bibinfo{year}{2012}.
\newblock \bibinfo{title}{Effects of layered accretion on the mechanics of
  masonry structures}.
\newblock \bibinfo{journal}{Mechanics Based Design of Structures and Machines}
  \bibinfo{volume}{40}, \bibinfo{pages}{163--184}.
\newblock \DOIprefix\doi{10.1080/15397734.2011.628622}.
\bibitem[{Bieling et~al.(2016)Bieling, Li, Weichsel, McGorty, Jreij, Huang,
  Fletcher and Mullins}]{BielingLi2016}
\bibinfo{author}{Bieling, P.}, \bibinfo{author}{Li, T.D.},
  \bibinfo{author}{Weichsel, J.}, \bibinfo{author}{McGorty, R.},
  \bibinfo{author}{Jreij, P.}, \bibinfo{author}{Huang, B.},
  \bibinfo{author}{Fletcher, D.}, \bibinfo{author}{Mullins, R.D.},
  \bibinfo{year}{2016}.
\newblock \bibinfo{title}{Force feedback controls motor activity and mechanical
  properties of self-assembling branched actin networks}.
\newblock \bibinfo{journal}{Cell} \bibinfo{volume}{164},
  \bibinfo{pages}{115--127}.
\newblock \DOIprefix\doi{10.1016/j.cell.2015.11.057}.
\bibitem[{Bindschadler et~al.(2004)Bindschadler, Osborn, Dewey and
  McGrath}]{BindschadlerOsborn2004}
\bibinfo{author}{Bindschadler, M.}, \bibinfo{author}{Osborn, E.A.},
  \bibinfo{author}{Dewey, C.F.}, \bibinfo{author}{McGrath, J.L.},
  \bibinfo{year}{2004}.
\newblock \bibinfo{title}{A mechanistic model of the actin cycle}.
\newblock \bibinfo{journal}{Biophysical Journal} \bibinfo{volume}{86},
  \bibinfo{pages}{2720--2739}.
\newblock \DOIprefix\doi{10.1016/S0006-3495(04)74326-X}.
\bibitem[{de~Buyl et~al.(2013)de~Buyl, Mikhailov and
  Kapral}]{BuylMikhailov2013}
\bibinfo{author}{de~Buyl, P.}, \bibinfo{author}{Mikhailov, A.S.},
  \bibinfo{author}{Kapral, R.}, \bibinfo{year}{2013}.
\newblock \bibinfo{title}{Self-propulsion through symmetry breaking}.
\newblock \bibinfo{journal}{EPL (Europhysics Letters)} \bibinfo{volume}{103},
  \bibinfo{pages}{60009}.
\newblock \DOIprefix\doi{10.1209/0295-5075/103/60009}.
\bibitem[{Cameron et~al.(1999)Cameron, Footer, van Oudenaarden and
  Theriot}]{CameronFooter1999}
\bibinfo{author}{Cameron, L.A.}, \bibinfo{author}{Footer, M.J.},
  \bibinfo{author}{van Oudenaarden, A.}, \bibinfo{author}{Theriot, J.A.},
  \bibinfo{year}{1999}.
\newblock \bibinfo{title}{Motility of acta protein-coated microspheres driven
  by actin polymerization}.
\newblock \bibinfo{journal}{Proc Natl Acad Sci USA} \bibinfo{volume}{96},
  \bibinfo{pages}{4908--4913}.
\newblock \DOIprefix\doi{10.1073/pnas.96.9.4908}.
\bibitem[{Cardamone et~al.(2011)Cardamone, Laio, Torre, Shahapure and
  DeSimone}]{CardamoneLaio2011}
\bibinfo{author}{Cardamone, L.}, \bibinfo{author}{Laio, A.},
  \bibinfo{author}{Torre, V.}, \bibinfo{author}{Shahapure, R.},
  \bibinfo{author}{DeSimone, A.}, \bibinfo{year}{2011}.
\newblock \bibinfo{title}{Cytoskeletal actin networks in motile cells are
  critically self-organized systems synchronized by mechanical interactions}.
\newblock \bibinfo{journal}{Proc Natl Acad Sci USA} \bibinfo{volume}{108},
  \bibinfo{pages}{13978--13983}.
\newblock \DOIprefix\doi{10.1073/pnas.1100549108}.
\bibitem[{Chaudhuri et~al.(2007)Chaudhuri, Parekh and
  Fletcher}]{ChaudhuriParekh2007}
\bibinfo{author}{Chaudhuri, O.}, \bibinfo{author}{Parekh, S.H.},
  \bibinfo{author}{Fletcher, D.A.}, \bibinfo{year}{2007}.
\newblock \bibinfo{title}{Reversible stress softening of actin networks}.
\newblock \bibinfo{journal}{Nature} \bibinfo{volume}{445},
  \bibinfo{pages}{295--298}.
\newblock \DOIprefix\doi{10.1038/nature05459}.
\bibitem[{Ciarletta et~al.(2013)Ciarletta, Preziosi and
  Maugin}]{CiarlettaPreziosi2013}
\bibinfo{author}{Ciarletta, P.}, \bibinfo{author}{Preziosi, L.},
  \bibinfo{author}{Maugin, G.A.}, \bibinfo{year}{2013}.
\newblock \bibinfo{title}{Mechanobiology of interfacial growth}.
\newblock \bibinfo{journal}{Journal of the Mechanics and Physics of Solids}
  \bibinfo{volume}{61}, \bibinfo{pages}{852--872}.
\newblock \DOIprefix\doi{10.1016/j.jmps.2012.10.011}.
\bibitem[{Edelstein-Keshet and
  Ermentrout(2000)}]{Edelstein-KeshetErmentrout2000}
\bibinfo{author}{Edelstein-Keshet, L.}, \bibinfo{author}{Ermentrout, G.B.},
  \bibinfo{year}{2000}.
\newblock \bibinfo{title}{Models for spatial polymerization dynamics of
  rod-like polymers}.
\newblock \bibinfo{journal}{Journal of Mathematical Biology}
  \bibinfo{volume}{40}, \bibinfo{pages}{64--96}.
\newblock \DOIprefix\doi{10.1007/s002850050005}.
\bibitem[{Ganghoffer and Goda(2018)}]{GanghofferGoda2018}
\bibinfo{author}{Ganghoffer, J.F.}, \bibinfo{author}{Goda, I.},
  \bibinfo{year}{2018}.
\newblock \bibinfo{title}{A combined accretion and surface growth model in the
  framework of irreversible thermodynamics}.
\newblock \bibinfo{journal}{International Journal of Engineering Science}
  \bibinfo{volume}{127}, \bibinfo{pages}{53--79}.
\newblock \DOIprefix\doi{10.1016/j.ijengsci.2018.02.006}.
\bibitem[{Goriely(2017)}]{Goriely:2017}
\bibinfo{author}{Goriely, A.}, \bibinfo{year}{2017}.
\newblock \bibinfo{title}{The Mathematics and Mechanics of Biological Growth}.
  volume~\bibinfo{volume}{45} of \textit{\bibinfo{series}{Interdisciplinary
  Applied Mathematics}}.
\newblock \bibinfo{edition}{1} ed., \bibinfo{publisher}{Springer}.
\newblock \DOIprefix\doi{10.1007/978-0-387-87710-5}.
\bibitem[{van~der Gucht et~al.(2005)van~der Gucht, Paluch, Plastino and
  Sykes}]{GuchtPaluch2005}
\bibinfo{author}{van~der Gucht, J.}, \bibinfo{author}{Paluch, E.},
  \bibinfo{author}{Plastino, J.}, \bibinfo{author}{Sykes, C.},
  \bibinfo{year}{2005}.
\newblock \bibinfo{title}{Stress release drives symmetry breaking for
  actin-based movement}.
\newblock \bibinfo{journal}{Proc Natl Acad Sci USA} \bibinfo{volume}{102},
  \bibinfo{pages}{7847}.
\newblock \DOIprefix\doi{10.1073/pnas.0502121102}.
\bibitem[{John et~al.(2008)John, Peyla, Kassner, Prost and
  Misbah}]{JohnPeyla2008}
\bibinfo{author}{John, K.}, \bibinfo{author}{Peyla, P.},
  \bibinfo{author}{Kassner, K.}, \bibinfo{author}{Prost, J.},
  \bibinfo{author}{Misbah, C.}, \bibinfo{year}{2008}.
\newblock \bibinfo{title}{Nonlinear study of symmetry breaking in actin gels:
  Implications for cellular motility}.
\newblock \bibinfo{journal}{Physical Review Letters} \bibinfo{volume}{100},
  \bibinfo{pages}{068101}.
\newblock \DOIprefix\doi{10.1103/PhysRevLett.100.068101}.
\bibitem[{Noireaux et~al.(2000)Noireaux, Golsteyn, Friederich, Prost, Antony,
  Louvard and Sykes}]{NoireauxGolsteyn:2000}
\bibinfo{author}{Noireaux, V.}, \bibinfo{author}{Golsteyn, R.M.},
  \bibinfo{author}{Friederich, E.}, \bibinfo{author}{Prost, J.},
  \bibinfo{author}{Antony, C.}, \bibinfo{author}{Louvard, D.},
  \bibinfo{author}{Sykes, C.}, \bibinfo{year}{2000}.
\newblock \bibinfo{title}{Growing an actin gel on spherical surfaces}.
\newblock \bibinfo{journal}{Biophysical Journal} \bibinfo{volume}{78},
  \bibinfo{pages}{1643--1654}.
\newblock \DOIprefix\doi{10.1016/S0006-3495(00)76716-6}.
\bibitem[{Parekh et~al.(2005)Parekh, Chaudhuri, Theriot and
  Fletcher}]{ParekhChaudhuri2005}
\bibinfo{author}{Parekh, S.H.}, \bibinfo{author}{Chaudhuri, O.},
  \bibinfo{author}{Theriot, J.A.}, \bibinfo{author}{Fletcher, D.A.},
  \bibinfo{year}{2005}.
\newblock \bibinfo{title}{Loading history determines the velocity of
  actin-network growth}.
\newblock \bibinfo{journal}{Nature Cell Biology} \bibinfo{volume}{7},
  \bibinfo{pages}{1219--1223}.
\newblock \DOIprefix\doi{10.1038/ncb1336}.
\bibitem[{Prost et~al.(2015)Prost, Jülicher and Joanny}]{ProstJuelicher2015}
\bibinfo{author}{Prost, J.}, \bibinfo{author}{Jülicher, F.},
  \bibinfo{author}{Joanny, J.F.}, \bibinfo{year}{2015}.
\newblock \bibinfo{title}{Active gel physics}.
\newblock \bibinfo{journal}{Nature Physics} \bibinfo{volume}{11},
  \bibinfo{pages}{111--117}.
\newblock \DOIprefix\doi{10.1038/nphys3224}.
\bibitem[{Prost et~al.(2008)Prost, Joanny, Lenz and Sykes}]{ProstJoanny2008}
\bibinfo{author}{Prost, J.}, \bibinfo{author}{Joanny, J.F.},
  \bibinfo{author}{Lenz, P.}, \bibinfo{author}{Sykes, C.},
  \bibinfo{year}{2008}.
\newblock \bibinfo{title}{The physics of listeria propulsion}, in:
  \bibinfo{editor}{Lenz, P.} (Ed.), \bibinfo{booktitle}{Cell Motility}.
  \bibinfo{publisher}{Springer New York}, \bibinfo{address}{New York, NY}.
  Biological and Medical Physics, Biomedical Engineering.
  chapter~\bibinfo{chapter}{1}, pp. \bibinfo{pages}{1--30}.
\newblock \DOIprefix\doi{10.1007/978-0-387-73050-9_1}.
\bibitem[{Skalak et~al.(1982)Skalak, Dasgupta, Moss, Otten, Dullemeijer and
  Vilmann}]{SkalakDasgupta1982}
\bibinfo{author}{Skalak, R.}, \bibinfo{author}{Dasgupta, G.},
  \bibinfo{author}{Moss, M.}, \bibinfo{author}{Otten, E.},
  \bibinfo{author}{Dullemeijer, P.}, \bibinfo{author}{Vilmann, H.},
  \bibinfo{year}{1982}.
\newblock \bibinfo{title}{Analytical description of growth}.
\newblock \bibinfo{journal}{Journal of Theoretical Biology}
  \bibinfo{volume}{94}, \bibinfo{pages}{555--577}.
\newblock \DOIprefix\doi{10.1016/0022-5193(82)90301-0}.
\bibitem[{Skalak et~al.(1997)Skalak, Farrow and Hoger}]{SkalakFarrow1997}
\bibinfo{author}{Skalak, R.}, \bibinfo{author}{Farrow, D.A.},
  \bibinfo{author}{Hoger, A.}, \bibinfo{year}{1997}.
\newblock \bibinfo{title}{Kinematics of surface growth}.
\newblock \bibinfo{journal}{Journal of Mathematical Biology}
  \bibinfo{volume}{35}, \bibinfo{pages}{869--907}.
\newblock \DOIprefix\doi{10.1007/s002850050081}.
\bibitem[{Theriot(2000)}]{Theriot2000}
\bibinfo{author}{Theriot, J.A.}, \bibinfo{year}{2000}.
\newblock \bibinfo{title}{The polymerization motor}.
\newblock \bibinfo{journal}{Traffic} \bibinfo{volume}{1},
  \bibinfo{pages}{19--28}.
\newblock \DOIprefix\doi{10.1034/j.1600-0854.2000.010104.x}.
\bibitem[{Tomassetti et~al.(2016)Tomassetti, Cohen and
  Abeyaratne}]{TomassettiCohen:2016}
\bibinfo{author}{Tomassetti, G.}, \bibinfo{author}{Cohen, T.},
  \bibinfo{author}{Abeyaratne, R.}, \bibinfo{year}{2016}.
\newblock \bibinfo{title}{Steady accretion of an elastic body on a hard
  spherical surface and the notion of a four-dimensional reference space}.
\newblock \bibinfo{journal}{Journal of the Mechanics and Physics of Solids}
  \bibinfo{volume}{96}, \bibinfo{pages}{333--352}.
\newblock \DOIprefix\doi{10.1016/j.jmps.2016.05.015}.
\bibitem[{Truskinovsky and Zurlo(2019)}]{TruskinovskyZurlo2019}
\bibinfo{author}{Truskinovsky, L.}, \bibinfo{author}{Zurlo, G.},
  \bibinfo{year}{2019}.
\newblock \bibinfo{title}{Nonlinear elasticity of incompatible surface growth}.
\newblock \bibinfo{journal}{Physical Review E} \bibinfo{volume}{99},
  \bibinfo{pages}{053001}.
\newblock \DOIprefix\doi{10.1103/physreve.99.053001}.
\bibitem[{Zimmermann(2014)}]{Zimmermann2014}
\bibinfo{author}{Zimmermann, J.}, \bibinfo{year}{2014}.
\newblock \bibinfo{title}{Modeling the lamellipodium of motile cells}.
\newblock Ph.D. thesis. Humboldt-Universität zu Berlin,
  Mathematisch-Naturwissenschaftliche Fakultät I.
\newblock \DOIprefix\doi{10.18452/16871}.
\bibitem[{Zurlo and Truskinovsky(2017)}]{ZurloTruskinovsky2017}
\bibinfo{author}{Zurlo, G.}, \bibinfo{author}{Truskinovsky, L.},
  \bibinfo{year}{2017}.
\newblock \bibinfo{title}{Printing non-euclidean solids}.
\newblock \bibinfo{journal}{Physical Review Letters} \bibinfo{volume}{119},
  \bibinfo{pages}{048001}.
\newblock \DOIprefix\doi{10.1103/PhysRevLett.119.048001}.
\bibitem[{Zurlo and Truskinovsky(2018)}]{ZurloTruskinovsky2018}
\bibinfo{author}{Zurlo, G.}, \bibinfo{author}{Truskinovsky, L.},
  \bibinfo{year}{2018}.
\newblock \bibinfo{title}{Inelastic surface growth}.
\newblock \bibinfo{journal}{Mechanics Research Communications}
  \bibinfo{volume}{93}, \bibinfo{pages}{174--179}.
\newblock \DOIprefix\doi{10.1016/j.mechrescom.2018.01.007}.

\end{thebibliography}

\end{document}